\numberwithin{equation}{section}
\newtheorem{thm}[equation]{Theorem}
\newtheorem{lem}[equation]{Lemma}
\numberwithin{equation}{section}
\renewcommand\a{\alpha}
\newcommand\g{\gamma}
\renewcommand\d{\delta}
\newcommand\e{\varepsilon}
\renewcommand\l{\lambda}
\renewcommand\L{\Lambda}
\newcommand\G{\Gamma}
\newcommand\f{\frac}
\newcommand\smallf[2]{{\textstyle{\frac{#1}{#2}}}}
\newcommand\srel[2]{\begin{smallmatrix} {#1} \\ {#2} \end{smallmatrix}}
\newcommand{\Z}{{\mathbb{Z}}}
\newcommand{\R}{{\mathbb{R}}}
\newcommand{\C}{{\mathbb{C}}}
\newcommand{\Q}{{\mathbb{Q}}}
\newcommand{\U}{{\mathbb{H}}}
\newcommand\im{\mbox{Im~}}
\renewcommand\Re{\text{Re~}}
\renewcommand\({\left(}
\renewcommand\){\right)}
\newcommand{\ttwo}[4]{
\(\begin{smallmatrix}{#1} & {#2}
\\ {#3} & {#4} \end{smallmatrix}\)}
\newcommand{\sgn}{\operatorname{sgn}}
\newcommand{\gobble}[1]{}
  \newcommand{\rangeref}[2]{%
~\ref{#1}--\afterassignment\gobble\fam 0\ref{#2}%
  }
\def\imod#1{\allowbreak\mkern10mu({\operator@font mod}\,\,#1)}
\def\IR{{\mathbb R}}
\def\EZ(#1,#2){\mathcal{Z}_{\left(#1,#2\right)}}
\def\cE{{\cal E}}
\def\cE{\mathcal{E}}
\def\cF{\mathcal{F}}
\def\cV{\mathcal{V}}
\newcommand{\be}{\begin{equation}}
\newcommand{\ee}{\end{equation}}
\newcommand{\bea}{\begin{eqnarray}}
\newcommand{\eea}{\end{eqnarray}}
\def\a{\alpha }
\def\g{\gamma }
\def\threeh{{\scriptstyle {3 \over 2}}}
\def\fiveh{{\scriptstyle {5 \over 2}}}
\def\calE{{\mathcal E}}
\def\calT{{\mathcal T}}
\def\calF{{\mathcal F}}
\def\nn{\nonumber}
\def\half{{\scriptstyle {1 \over 2}}}
\def\quart{{\scriptstyle {1 \over 4}}}
\def\eighth{{\scriptstyle {1 \over 8}}}
\def\sevenh{{\scriptstyle {7 \over 2}}}
\def\eighth{{\scriptstyle {1\over 8}}}
\title[$SL(2,\Z)$-invariance and D-instanton contributions]{$SL(2,\Z)$-invariance and D-instanton
contributions to the $D^6R^4$ interaction}
\author[M.B. Green]{Michael B. Green}
\address{Michael B. Green\\
Department of Applied Mathematics and
Theoretical Physics\\
 Wilberforce Road, Cambridge CB3 0WA, UK}
\email{ M.B.Green@damtp.cam.ac.uk}
\author[S.D. Miller]{Stephen D. Miller}
\address{Stephen D Miller\\
Department of Mathematics\\
 Rutgers University, Piscataway, NJ 08854-8019, USA}
\email{miller@math.rutgers.edu}
\author[P. Vanhove]{Pierre Vanhove}
 \address{Pierre Vanhove\\
Institut des Hautes Etudes Scientifiques\\
 Le Bois-Marie, 35 route de Chartres\\
 F-91440 Bures-sur-Yvette, France\hfill\break
Institut de Physique Th{\'e}orique,\\
CEA, IPhT, F-91191 Gif-sur-Yvette, France\\
CNRS, URA 2306, F-91191 Gif-sur-Yvette, France}
\email{pierre.vanhove@cea.fr}
\thanks{DAMTP-2014-14   , IPHT-T-14/009, IHES/P/14/07}
\date{}%xxx}
\begin{document}

 \begin{abstract}
   The modular invariant coefficient of the $D^6R^4$ interaction in
   the low energy expansion of type~IIB string theory has been
   conjectured to be a solution of an inhomogeneous Laplace eigenvalue
   equation, obtained by considering the toroidal compactification of
   two-loop Feynman diagrams of eleven-dimensional supergravity.  In
   this paper we determine the exact $SL(2,\Z)$-invariant solution
   $f(x+iy)$ to this differential equation
   satisfying an appropriate moderate growth condition as $y\to
   \infty$ (the weak coupling limit).  The solution is presented as a
   Fourier series with modes $\widehat{f}_n(y) e^{2\pi i n x}$, where
   the mode coefficients, $\widehat{f}_n(y)$ are bilinear in
   $K$-Bessel functions.  Invariance under $SL(2,\Z)$ requires these
   modes to satisfy the nontrivial boundary condition $
   \widehat{f}_n(y) =O(y^{-2})$ for small $y$, which uniquely
   determines the solution. The large-$y$ expansion of $f(x+iy)$
   contains the known perturbative (power-behaved) terms, together
   with precisely-determined exponentially decreasing contributions
   that have the form expected of D-instantons, anti-D-instantons and
   D-instanton/anti-D-instanton pairs.

% {\bf file : \jobname.tex ---  Version : \today}
\end{abstract}

\maketitle
\tableofcontents

%%%%%%%%%%%%%%%%%%%%%%%%%%%%%%%%%%%%%%%%%%%%%%%%%%%%%%%%%%%%%%%%%%%%%%%%%%%%

\section{Introduction}
\label{intro}

The low energy expansion of string theory has a rich dependence on the
moduli, or scalar  fields, that parameterize the coset space
$G(\R)/K(\R)$, where $G$ is the duality group and $K$ its maximal
compact subgroup.  In this paper we will be concerned with the
simplest nontrivial example, type IIB superstring theory in $D=10$
space-time dimensions,  in which $G= SL(2)$ and $K = SO(2)$.  Duality
invariance of the theory  implies that the IIB scattering amplitudes
should transform covariantly  under the discrete arithmetic subgroup,
$G(\Z) = SL(2,\Z)$.   This implies that the coefficients of the
terms at any order in the low energy expansion of the amplitude are
modular functions, which restricts their dependence on the moduli.

Terms of sufficiently low dimension in the effective action
preserve a fraction of the 32 supercharges, i.e., they are BPS
interactions.  Such interactions have particularly simple
moduli-dependent coefficients.  The lowest-order terms that contribute to the four-particle amplitude,  beyond the
Einstein--Hilbert action, are the $\half$-BPS and $\quart$-BPS
interactions of order $R^4$ and $D^4R^4$,  where $R^4$ denotes four
powers of the Riemann curvature tensor with the sixteen indices
contracted by a standard sixteen-index tensor  \cite[appendix~9.A]{Green:1987mn}  that will not concern us
here.  These interactions have  coefficients given by non-holomorphic
Eisenstein series, $E_\threeh(\Omega)$ and $E_{\fiveh}(\Omega)$,
respectively   (we refer to~\eqref{eisenfourier} for a definition of these series).  Here $\Omega=x+iy$ is the complex modulus and
$y^{-1}=g_B$ is the type IIB string coupling.

It is the coefficient of the next term, the $\eighth$-BPS interaction
$D^6R^4$, that is the subject of this paper.  This interaction enters into the type IIB string frame low energy effective action in the form
 \be
\ell_s^4 \int d^{10}x\sqrt{-\det G^{(10)}}\, y^{-1}\, f(\Omega)\, D^6R^4\,,
 \label{effaction}
\ee
where $G^{(10)}$ is the ten-dimensional string frame metric, $\ell_s$ is the string length scale and we have suppressed an overall numerical coefficient.  The factor of $y^{-1}$ cancels when $G^{(10)}$ is rescaled in a manner that converts the expression to the Einstein frame, in which $SL(2,\Z)$ duality should be manifest.  The coefficient $f(\Omega)$ is a modular function  that was conjectured in \cite{Green:2005ba} to be the solution of an
inhomogeneous Laplace eigenvalue equation\footnote{In   reference \cite{Green:2005ba}, $f(\Omega)$ was denoted  by $\calE_{(\threeh,\threeh)}$; it has also been denoted by $\calE_{(0,1)}$ in earlier papers on this subject, such as \cite{GMRV} and \cite{GMV}.}
\be
(\Delta_\Omega- 12)\, f(\Omega)  \ \ = \ \  - \(2\,\zeta(3)\,E_{\threeh}(\Omega)\)^2 ,
\label{laplaceeigen}
\ee
where $\Delta_\Omega =y^2\, \left( \partial^2_{x} + \partial_{y}^2
\right)$.
The basis of this conjecture was an implementation of the duality that
relates M-theory compactified on a torus to type~IIB string theory
compactified on a circle~\cite{Witten:1995ex,Schwarz:1995dk,Aspinwall:1995fw}.  More precisely, the procedure used in
\cite{Green:2005ba} was to evaluate the terms of order $D^6\, R^4$  in
two-loop four-graviton supergravity amplitude compactified on a
two-torus to nine dimensions.  The complex structure of the torus, $\Omega$,
translates into the complex coupling constant of the type~IIB  string theory while the torus volume, $\cV$,  is proportional
to an inverse power of the radius of the string theory circle, $r_B$.

However, the analysis of~\eqref{laplaceeigen}  in
\cite{Green:2005ba} was incomplete in several respects. Although the
power-behaved terms in the large-$y$ expansion of $f(\Omega)$ were
determined in~\cite{Green:2005ba}, a general analysis of the solution
including the non power-behaved parts of the solution was missing.
The objective of this paper is to develop such an analysis.
Furthermore, since~\eqref{laplaceeigen}  is the simplest of  the more
general inhomogeneous eigenvalue equations that arise at higher orders
in the low energy expansion~\cite{Green:2008bf}, such an analysis
should be of more general significance.

The layout of this paper is as follows.    The detailed solution of~\eqref{laplaceeigen} is given in section~\ref{sec:fouriertransform}.  Our procedure is to consider the inhomogeneous second order differential equations satisfied by the mode coefficients of the Fourier series
\be
f(\Omega) \ \  = \ \  \sum_n \widehat{f}_n(y)\, e^{2\pi i n x}\,.
\label{fouriermodes}
\ee
This requires the imposition of appropriate boundary conditions on
$\widehat{f}_n(y)$  at $y\to \infty$ and $y\to 0$.  The $y\to \infty$
condition (the weak coupling limit)  is determined by the moderate
growth condition\footnote{In the present context, this condition means that for
any $y_0>0$ there exists some constant $C>0$ such that $|f(x+iy)|\le C y^3$
 for all $y\ge y_0$.} requiring that
\be
\label{largebound}
 f(\Omega) \ \  = \ \  O(y^3)\,,
\ee
which corresponds to tree-level behaviour of the $D^6R^4$ interaction in string perturbation
theory.
 The $y\to 0$ condition  (the strong coupling limit),  which is much less obvious,  requires
\be
  \widehat{f}_n (y) \ \ =  \ \  O(y^{-2})\,.
\label{yzerocon}
\ee
 We will see that this condition  follows from a subtle relation
 between the weak coupling limit condition and  $SL(2,\Z)$
 invariance.
 These boundary conditions
pin down the solution completely (with no arbitrary  undetermined  coefficients)  and
we are able to determine the exact solution for $\widehat{f}_n(y)$ for all $n$:

\begin{thm}\label{mainthm}
The unique solution    (\ref{laplaceeigen}) satisfying (\ref{largebound}) is given in terms of (\ref{fouriermodes}) by
\bea
\label{summarysol}
\widehat{f}_n(y) &  =&  \delta_{n,0}\, \tilde f(y) \ + \  \alpha_{n}\,\sqrt{y}\, K_{\sevenh}(2\pi |n|y) \\
&&+\sum_{\srel{n_1+n_2\,=\,n}{(n_1,n_2)\,\ne\, (0,0)}}
\sum_{i,j\,=\,0,1}M^{ij}_{n_1,n_2}(\pi |n| y)\, K_i(2\pi |n_1|y)\, K_j(2\pi |n_2| y) \nn\,,
\eea
where $\alpha_n$ are constants, and $\tilde f$ and   $M^{ij}_{n_1,n_2}(z)$  are  polynomials in $z$ and $1/z$.  The $K$-Bessel functions must be replaced by an appropriate limit when either $n$, $n_1$, or $n_2$ vanishes; see section~\ref{sec:solutions} for complete details.
\end{thm}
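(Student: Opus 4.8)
\emph{Proof strategy.} The plan is to convert the $SL(2,\Z)$-invariant problem into an infinite family of ordinary differential equations in $y$, one per Fourier coefficient, solve each explicitly, and then use the growth conditions at $y=\infty$ and $y=0$ to eliminate all free constants; throughout, ``solution'' means an $SL(2,\Z)$-invariant one. Inserting \eqref{fouriermodes} and the Fourier expansion of $E_{\threeh}$ into \eqref{laplaceeigen} and using $\Delta_\Omega\big(\widehat{f}_n(y)e^{2\pi inx}\big)=\big(y^2\widehat{f}_n''-4\pi^2n^2y^2\widehat{f}_n\big)e^{2\pi inx}$ gives, mode by mode,
\be
\Big(y^2\frac{d^2}{dy^2}-4\pi^2n^2y^2-12\Big)\widehat{f}_n(y)\ =\ -\,(2\zeta(3))^2\!\!\sum_{n_1+n_2=n}\widehat{E}_{n_1}(y)\,\widehat{E}_{n_2}(y)\,,
\ee
where $\widehat{E}_0(y)$ is a linear combination of $y^{3/2}$ and $y^{-1/2}$ while $\widehat{E}_m(y)$ is a multiple of $\sqrt{y}\,K_1(2\pi|m|y)$ for $m\ne0$. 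Hence the right-hand side is a finite sum (an absolutely convergent infinite one when $n=0$) of terms of only three shapes: a power of $y$; a power of $y$ times $K_1(2\pi|m|y)$; and $y\,K_1(2\pi|n_1|y)K_1(2\pi|n_2|y)$.

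The associated homogeneous equation has solutions $y^4,\,y^{-3}$ for $n=0$, and for $n\ne0$ the solutions $\sqrt{y}\,K_{\sevenh}(2\pi|n|y),\,\sqrt{y}\,I_{\sevenh}(2\pi|n|y)$, since $\widehat{f}_n=\sqrt{y}\,h(y)$ converts it into the modified Bessel equation of index $\nu$ with $\nu^2=\tfrac14+12$. For a particular solution I would take, for each source term, the natural ansatz of the same shape with Laurent-polynomial-in-$y$ coefficients: the recursions $K_\nu'(z)=-\tfrac12\big(K_{\nu-1}(z)+K_{\nu+1}(z)\big)$ and $K_{\nu+1}(z)-K_{\nu-1}(z)=\tfrac{2\nu}{z}K_\nu(z)$ show that the span of $\{p_{ij}(y)\,K_i(2\pi|n_1|y)\,K_j(2\pi|n_2|y):i,j\in\{0,1\}\}$ over Laurent polynomials (and its analogues with fewer Bessel factors, or none) is preserved by the operator above, so matching coefficients is a finite, explicitly solvable linear system; solving it produces the polynomials $M^{ij}_{n_1,n_2}$ and the polynomial $\tilde f$ of \eqref{summarysol}. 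Note that the constructed $p_{11}(y)$ necessarily contains a $y^{-1}$ term, because $4\pi^2n_1^2+4\pi^2n_2^2-4\pi^2n^2=-8\pi^2n_1n_2\ne0$ for both $n_i\ne0$, so the particular solution behaves like a constant times $y^{-3}$ as $y\to0$. Adding the homogeneous solution leaves two free constants in each $\widehat{f}_n$.

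The condition \eqref{largebound} removes one: the exponentially growing $\sqrt{y}\,I_{\sevenh}(2\pi|n|y)$ cannot occur for $n\ne0$, and $y^4$ cannot occur for $n=0$ (the particular solutions are all $O(y^3)$, reproducing the known perturbative terms in the $n=0$ mode). The surviving constant --- the coefficient $\alpha_n$ of $\sqrt{y}\,K_{\sevenh}(2\pi|n|y)$, respectively the coefficient of $y^{-3}$ when $n=0$ --- is pinned down by the behaviour as $y\to0$, and this is where modularity enters: $SL(2,\Z)$-invariance together with \eqref{largebound} forces $\widehat{f}_n(y)=O(y^{-2})$ as $y\to0$. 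Indeed invariance and moderate growth give the pointwise bound $|f(\Omega)|\le C\,y^{3}/|q\Omega-p|^{6}$ near any cusp $p/q\in\Q$ (apply the $O(y^3)$ bound after mapping $p/q$ to $\infty$ by an element of $SL(2,\Z)$), and integrating this over a period in $x$ --- the points $x=0$ and $x=1$ giving the dominant $\int y^3/(x^2+y^2)^3\,dx\sim y^{-2}$ and the remaining cusps contributing a convergent $\sum_q\phi(q)q^{-6}\cdot y^{-2}$ --- yields $\int_0^1|f(x+iy)|\,dx=O(y^{-2})$, hence the same bound for each $\widehat{f}_n$. Since $K_{\sevenh}$ is elementary, $\sqrt{y}\,K_{\sevenh}(2\pi|n|y)$ has a small-$y$ Laurent expansion whose most singular term is a nonzero multiple of $y^{-3}$; combined with the $y^{-3}$ behaviour of the particular solution noted above, this determines $\alpha_n$ (and the $y^{-3}$-coefficient for $n=0$) uniquely. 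Finally, $f$ itself is unique because two solutions differ by an $SL(2,\Z)$-invariant solution of $(\Delta_\Omega-12)g=0$ of moderate growth, the only such $g$ being a multiple of the Eisenstein series $E_4$, whose constant term grows like $y^4$ and so violates \eqref{largebound} unless $g\equiv0$.

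The step I expect to be the real obstacle is not this ODE analysis --- which is, in the end, bookkeeping with Bessel functions plus care for the degenerate cases where one of $n$, $n_1$, $n_2$ vanishes --- but establishing that the function assembled from the $\widehat{f}_n$ is genuinely $SL(2,\Z)$-invariant, i.e.\ that the arithmetic condition $\widehat{f}_n(y)=O(y^{-2})$ really encodes invariance under $S:\Omega\mapsto-1/\Omega$ beyond the manifest invariance under $\Omega\mapsto\Omega+1$. For existence one can argue spectrally: the $L^2$-spectrum of $\Delta_\Omega$ on $SL(2,\Z)\backslash\mathbb{H}$ is $(-\infty,-\tfrac14]\cup\{0\}$, so $\Delta_\Omega-12$ is boundedly invertible there; subtracting from $-(2\zeta(3)E_{\threeh})^2$ an explicit modular function of moderate growth with the same constant-term asymptotics at the cusp (built from Eisenstein series) places the source in $L^2$, and inverting produces an $SL(2,\Z)$-invariant, moderate-growth solution whose Fourier modes must then coincide with the functions constructed above, which proves \eqref{summarysol}. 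Alternatively one verifies the $S$-invariance of the explicit Fourier series directly, via Poisson summation together with a Mellin-transform analysis of the Bessel bilinears.
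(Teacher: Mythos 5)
Your strategy is essentially the paper's own: Fourier-expand, solve the mode ODEs with Bessel-bilinear particular solutions, kill the exponentially growing homogeneous solution with the $O(y^3)$ condition, and fix the remaining constant via the small-$y$ bound $\widehat{f}_n(y)=O(y^{-2})$ deduced from modularity. Your cusp-by-cusp integration is an unpacked version of the paper's comparison $|f|\le C\,E_3$ followed by $\int_0^1 E_3\,dx = y^3+c\,y^{-2}$, which is exactly lemma~\ref{lem:smally}; the spectral existence argument you sketch at the end is the content of appendix~\ref{sec:spectral}, and uniqueness via $E_4$ matches the paper as well.

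There is, however, one concrete error that propagates. You assert that the source $\sum_{n_1+n_2=n}\widehat{E}_{n_1}\widehat{E}_{n_2}$ is a \emph{finite} sum for $n\neq 0$. It is not: for every $n$ there are infinitely many pairs $(n_1,n-n_1)$, in particular all those with $n_1n_2<0$ (the paper's sector~(iv)). Consequently, for $n\ne 0$ the particular solution is itself an infinite sum of Bessel bilinears and the homogeneous coefficient is $\alpha_n=\sum_{n_1}\alpha_{n_1,n-n_1}$; one must verify that these sums converge (the paper shows $\alpha_{n_1,n-n_1}=O(n_1^{-6})$ for fixed $n$) and, more delicately, that the small-$y$ cancellation can be performed term by term, since each $\widehat{f}_{n_1,n_2}$ is only $O(y^{-1})$ at the origin while the aggregate $\widehat{f}_n$ genuinely grows like $y^{-2}$ (see \eqref{nonzerolimit}): the $y\to0$ limit does not commute with the sum over $(n_1,n_2)$. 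This is precisely why the paper imposes the boundary conditions \eqref{boundaryconditionsforfn1n2} on each $\widehat{f}_{n_1,n_2}$ separately rather than on the assembled mode. Your plan goes through once you restore the infinite sum and add these convergence and interchange checks; as written, for $n\ne0$ it would omit the infinitely many terms with $n_1n_2<0$ (the D-instanton/anti-D-instanton contributions) that the theorem's sum over $n_1+n_2=n$ explicitly includes.
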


As we shall now explain, each of the coefficients in the solution (\ref{summarysol}) has an interpretation in terms of quantities arising in string theory in the limit $y\to \infty$.  The parameter $y$ is the inverse string coupling constant and so this limit is the weak coupling limit, in which the dominant terms in the solution are power behaved in $y^{-1}$ and correspond to contributions in string perturbation theory.  Such terms arise in the $\tilde f(y)$ part of the $n=0$ mode, which has the form $\tilde f(y) = a_0y^3 + a_1 y +a_2y^{-1}$.  The values of the coefficients $a_0$, $a_1$ and $a_2$ are rational numbers multiplied by products of zeta values.  These coefficients are expected to correspond to the values obtained from an analysis of the low energy expansion of superstring  perturbation theory.  In that context the coefficient, $a_h$, of $y^{3-2h}$  arises as a term in  the low energy expansion of the contribution of a genus-$h$ Riemann surface to superstring perturbation theory  (see the review article by D'Hoker and Phong \cite{D'Hoker:1988ta}, and references therein,  for a comprehensive description of string perturbation theory, and \cite{D'Hoker:2001nj,Berkovits:2005ng} for details of the
expression for the genus-two four-graviton amplitude).   One  additional power-behaved term arises in~\eqref{summarysol} from the
$n=0$ contribution $\lim_{n\to 0}\alpha_{n} \sqrt{y} \,
K_{\sevenh}(2\pi |n|y) = a_3 y^{-3}$, which should correspond to a genus-three contribution in string perturbation
theory.  The coefficients $a_0$, $a_1$, $a_2$, and $a_3$ were extracted from the constant term of $f(\Omega)$
by somewhat different means in~\cite{Green:2005ba}.  The present status of the comparison of these values with those obtained from superstring perturbation theory will be given in section \ref{sec:stringy}.

The large-$y$ behaviour of the $K$-Bessel functions
in~\eqref{summarysol} gives a rich spectrum of exponentially
decreasing terms that may be interpreted as D-instanton effects in
string theory\footnote{The terminology is motivated by the fact that
  large-$y$ behaviour proportional to    $e^{2\pi i (n_1+n_2)x}e^{-2\pi
    (|n_1|+|n_2|)y}$ is characteristic of contributions of
  D-instantons and anti-D-instantons, although the precise form of
  such contributions has not been obtained by explicit D-instanton
  calculations.}.  It is particularly notable that there are
instanton/anti-instanton terms in the large-$y$ expansion.  For
example, the zero mode, $\widehat{f}_0$, contains a sum of an infinite
series of exponentially suppressed terms of the form $
\sum_{n_1=1}^\infty c_{n_1}\, e^{-4\pi |n_1| y}$, where the
coefficients $c_{n_1}$ are easily deduced from the large-$y$ limit
of~\eqref{summarysol} as we will also describe in
section~\ref{sec:fouriertransform}.

 In section~\ref{sec:stringy} we will discuss how the  information in
 the solution of~\eqref{laplaceeigen} makes contact with string
 theory.  In particular, the small coupling (equivalently, large-$y$)  expansion of
 the solution obtained in section~~\ref{sec:fouriertransform} contains
 a rich array of instanton and anti-instanton contributions.  One of
 the main new observations in this paper is that these  conspire to
 ensure that the strong coupling ($y\to 0$) limits of  the Fourier
 modes satisfy the appropriate small-$y$ boundary condition.  This
 appears  somewhat analogous to the manner in
 which instanton effects conspire to ensure the absence of a
 singularity in three-dimensional  ${\mathcal N} =4$ supersymmetric
Yang--Mills theory in the work of Seiberg--Witten
 \cite{Seiberg:1996nz}.

For completeness, we will present several alternative procedures for determining the solution   to~\eqref{laplaceeigen}  in three appendices.
In  appendix~\ref{sec:poincaremeth},  we will make the $SL(2,\Z)$
properties  of~\eqref{laplaceeigen} explicit  by expressing the
solution as  a  series of the form
\begin{equation}\label{1.7}
  f(\Omega) \ \ = \ \ {2\,\zeta(3)^2\over3}\, E_3(\Omega)  \ + \  \sum_{\gamma\,\in\,\mathcal
    S} (\det \gamma)^{-3}\, F(\gamma\Omega)\,,
\end{equation}
with ${\mathcal S}=\{\pm 1\}\backslash \{\ttwo{m_1}{n_1}{m_2}{n_2}\in M_2(\Z)\cap GL^+(2,\IR)|\gcd(m_1,n_1)=\gcd(m_2,n_2)=1\}$ (which is the set of $2\times 2$  matrices with integer entries and co-prime rows modulo an overall sign).
The function $F(\Omega)$  depends only on the ratio of the real and imaginary parts of $\Omega$, and  satisfies a
second order inhomogeneous ordinary differential equation  given in~\eqref{poincare4}.
 The convergence of the sum over the images of
$F(\Omega)$ under $SL(2,\Z)$ transformations is obtained only if one imposes
 suitable boundary conditions  in the
 limits $x/y \to 0$ and $x/y\to \infty$.\footnote{We are grateful to
  Don Zagier for describing the solution satisfying the appropriate
  boundary conditions, as well as for discussions concerning the
  relevance of this solution.}
    The Fourier modes of the
   $SL(2,\Z)$-invariant expression (\ref{1.7}) are considered in
appendix~\ref{sec:fouriermeth}, where we  give an alternative expression of  the Fourier modes $\widehat{f}_n(y)$   of
   $f(\Omega)$  in terms of integrals.  We have not succeeded in directly computing those integrals, but their values are of course  determined by the  analysis of section~\ref{sec:fouriertransform}.  Furthermore, the convergence properties of these integrals again leads to the $y\to 0$ boundary condition that was deduced by general arguments in section~\ref{sec:fouriertransform}.

In appendix~\ref{sec:schmidmethod} we will describe how the solution may be obtained  in a manner suggested by Schmid's  work on automorphic distributions of Eisenstein series~\cite{flato,korea}.   This gives yet another formula for $\widehat{f}_n(y)$ in lemma~\ref{kloostermanfourierexpansion}.  In appendix~\ref{sec:spectral} we will comment on the solution using the
R\"olcke-Selberg  spectral expansion.  This leads to a complete solution
of~\eqref{laplaceeigen}, but one which seems to be very difficult to use in
practice (at least for the nonzero Fourier modes) since it involves properties of unknown cusp forms.

%-----------------------------------------------------------------------
\section{Fourier modes of the inhomogeneous Laplace equation}
\label{sec:fouriertransform}

\subsection{ Fourier modes and boundary conditions}
\label{sec:boundaryterms}

We will now   consider  \eqref{laplaceeigen} in terms of the Fourier modes of both sides.
We write the Fourier expansion of the solution as
\begin{equation}
  f(x+iy)   \ \ = \ \  \sum_{n\,\in\,\Z} \, \widehat{f}_n(y)\,
  e^{2\pi i nx}
\label{fatstartofsect3}
\end{equation}
and the Fourier expansion of the source term  as
\bea
 S(x+iy)   \ \ = \ \   - 4\,\zeta(3)^2\,E_{3\over2}(x+iy)^2      \ \  = \ \   \sum_{n\,\in\,\Z}S_n(y)\,e^{2\pi i n x}\,.
 \label{sourcesum}
\eea
The latter are determined by the standard  Fourier expansion of the nonholomorphic Eisenstein series,
\be
E_s(x+iy) \ \  = \ \ \f{1}{2\zeta(2s)}\sum_{(c,d)\,\neq\,(0,0)}\f{y^s}{|c(x+iy)+d|^{2s}}    \ \  = \ \  \sum_{n\in\Z} \calF_{n,s}(y) \, e^{2\pi i n x}\,,
\label{eisenfourier}
\ee
where the zero mode consists of two power behaved terms,
\be
\calF_{0,s}(y)  \ \ = \ \  y^s \  +  \ \f{\sqrt \pi \,\Gamma(s-\half) \zeta(2s-1)}{\Gamma(s)\zeta(2s)}\, y^{1-s} \,,
\label{eisenzero}
   \ee
and the non-zero modes  are proportional to $K$-Bessel functions,
\be
\calF_{n,s}(y) \ \ = \ \  \f{2\,\pi^s}{\Gamma(s)\zeta(2s)}\,  |n|^{s-\half} \, \sigma_{1-2s}(|n|)
\sqrt{y}\,K_{s-\half}(2\pi |n|y) \,, \  \ \ n\neq 0
\label{nonzeroeisen}
\ee
(see \cite[\S1.6]{bump}).

Since the Laplace operator $\Delta_{\Omega}$ commutes with all group translations, the differential equation (\ref{laplaceeigen}) can be equivalently stated as the simultaneous differential equations
\begin{equation}
  \label{e:Diff}
  (y^2 \,\partial_y^2 - 12-4\pi^2n^2y^2) \, \widehat{f}_n(y) \ \  = \ \    S_n(y)\,, \  \ \ n \, \in\,\Z\,,
\end{equation}
for each Fourier mode of (\ref{fatstartofsect3}).
We will determine the solution for each value of $n$ in the form
\be
\widehat{f}_n(y)  \ \  = \ \  \widehat{f}^P_n(y)   \ + \  \widehat{f}^H_n(y) \,,
\label{e:fullform}
\ee
 where $\widehat{f}^P_n(y) $ is a particular solution to the equation
 and $ \widehat{f}^H_n(y)$ is a solution of the homogenous
 equation   which is chosen in order that  the solution $\widehat{f}_n(y)$ satisfies appropriate boundary conditions.

We now need to consider  these  boundary conditions.   The large $y$   (meaning weak string coupling)   growth condition  (\ref{largebound})  on $f(x+iy)$ carries over to each fourier coefficient $\widehat{f}_n(y)$, thus
\begin{equation}\label{boundaryconditionsforfn}
\widehat{f}_n(y)   \ \  = \ \    O(y^3) \ \ \text{for large~}\,y\,.
\end{equation}
In fact modes with $n\ne 0$ will be shown to decay
 like a constant times $y^{e_n} \exp(-2\pi |n| y)$ in this limit,
  with  values of $e_n$ that will be discussed later.

In addition to this boundary condition on each $\widehat{f}_n(y)$ for large $y$, there is also a condition for small $y$ which is in fact a consequence of (\ref{boundaryconditionsforfn}) together with the $SL(2,\Z)$-invariance of   $f(\Omega)$.  It is given by the following lemma.
\begin{lem}\label{lem:smally}
If $h(x+iy)$ is an $SL(2,\Z)$-invariant function on the upper half plane satisfying the large-$y$ growth condition $h(x+iy)=O(y^s)$ for some $s>1$, then each Fourier mode  $\widehat{h}_n(y)=\int_0^1 h(x+iy)e^{-2\pi i n x}dx$ of $h$ satisfies the bound $\widehat{h}_n(y)=O(y^{1-s})$ for small $y$.
In particular, the small-$y$ boundary condition for any mode number $n$ is
\be
\label{e:smally}
\widehat{f}_n(y) \ \  =  \ \ O(y^{-2})\,.
\ee
\end{lem}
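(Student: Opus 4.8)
The plan is to transfer the assumed decay of $h$ as $y\to\infty$ to a bound near the real axis, using that every rational point $p/q$ is a cusp $SL(2,\Z)$-equivalent to $i\infty$, so that in a local coordinate at $p/q$ (which rescales by $q^{2}$) the function $h$ inherits its moderate growth at $i\infty$. Introduce the invariant height $\mathcal{Y}(z):=\sup_{\gamma\in SL(2,\Z)}\imag(\gamma z)$. Because $\imag(\gamma z)=\imag(z)\,|cz+d|^{-2}$ depends only on the bottom row $(c,d)$ of $\gamma$, and the bottom rows of $SL(2,\Z)$ are exactly the coprime pairs, one obtains for $z=x+iy$ the explicit description
\[
\mathcal{Y}(x+iy)\ =\ \max\!\Big(\,y\,,\ \sup_{q\ge 1,\ m\in\Z,\ \gcd(m,q)=1}\ \frac{y}{q^{2}\big((x-m/q)^{2}+y^{2}\big)}\,\Big),
\]
the $c=0$ term contributing $y$ and the $c=\pm q$ terms contributing the rest. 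The first step is a uniform pointwise majorization $|h(x+iy)|\le C\,\mathcal{Y}(x+iy)^{s}$: choosing $\gamma_{0}$ with $\gamma_{0}z$ in the standard fundamental domain gives $\imag(\gamma_{0}z)\ge\sqrt{3}/2$; if $\imag(\gamma_{0}z)\ge 1$ the growth hypothesis gives $|h(z)|=|h(\gamma_{0}z)|\le C_{1}(\imag\gamma_{0}z)^{s}\le C_{1}\mathcal{Y}(z)^{s}$, while if $\sqrt{3}/2\le\imag(\gamma_{0}z)<1$ the point $\gamma_{0}z$ ranges over a fixed compact subset of the fundamental domain (the part with $\imag z\le 1$), on which the continuous function $h$ is bounded, and the lower bound $\mathcal{Y}(z)\ge\sqrt{3}/2$ absorbs that bound; local boundedness of $h$ is automatic in our application, where $f$ solves the elliptic equation~\eqref{laplaceeigen}.

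Since $|e^{-2\pi i n x}|=1$, this gives, uniformly in $n$, the estimate $|\widehat{h}_{n}(y)|\le\int_{0}^{1}|h(x+iy)|\,dx\le C\int_{0}^{1}\mathcal{Y}(x+iy)^{s}\,dx$ (in particular the Fourier integral converges), so it suffices to show $\int_{0}^{1}\mathcal{Y}(x+iy)^{s}\,dx=O(y^{1-s})$ as $y\to 0$. For this I would apply the elementary inequality $(\sup_{i}a_{i})^{s}\le\sum_{i}a_{i}^{s}$, valid for nonnegative $a_{i}$ and $s\ge0$, and write $m=p+kq$ with $0\le p<q$ and $k\in\Z$, to get
\[
\mathcal{Y}(x+iy)^{s}\ \le\ y^{s}\ +\ \sum_{q\ge 1}\ \sum_{\srel{0\le p<q}{\gcd(p,q)=1}}\ \sum_{k\in\Z}\Big(\frac{y}{q^{2}\big((x-p/q-k)^{2}+y^{2}\big)}\Big)^{s}.
\]
Integrating over $x\in[0,1]$ term by term, for each fixed $q,p$ the sum over $k\in\Z$ combines with $\int_{0}^{1}$ into $\int_{\R}$ (the intervals $[-p/q-k,\,1-p/q-k]$ tile the line), yielding $q^{-2s}\int_{\R}\big(y/(u^{2}+y^{2})\big)^{s}\,du=C_{s}\,y^{1-s}q^{-2s}$ with $C_{s}=\int_{\R}(u^{2}+1)^{-s}\,du<\infty$ (only $s>1/2$ is used here). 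Summing over the $\varphi(q)$ admissible residues $p$ and then over $q$,
\[
\int_{0}^{1}\mathcal{Y}(x+iy)^{s}\,dx\ \le\ y^{s}\ +\ C_{s}\,y^{1-s}\sum_{q\ge 1}\frac{\varphi(q)}{q^{2s}}\ =\ y^{s}\ +\ C_{s}\,\frac{\zeta(2s-1)}{\zeta(2s)}\,y^{1-s},
\]
and $\sum_{q}\varphi(q)q^{-2s}=\zeta(2s-1)/\zeta(2s)$ converges exactly because $s>1$. As $y^{s}\le y^{1-s}$ for $0<y\le1$ and $s\ge1/2$, we conclude $\widehat{h}_{n}(y)=O(y^{1-s})$; taking $h=f$ and $s=3$ gives~\eqref{e:smally}.

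The main obstacle is precisely the estimate $\int_{0}^{1}\mathcal{Y}(x+iy)^{s}\,dx=O(y^{1-s})$, whose content is the accounting of the cusp contributions: the $q^{2}$-scaling of the local coordinate at $p/q$ produces a factor $q^{-2s}$, integrating the model profile $\big(y/(u^{2}+y^{2})\big)^{s}$ over the line produces $y^{1-s}$, and summing over all Farey fractions converges precisely when $s>1$ — so the hypothesis $s>1$ enters essentially, and the bound is sharp, as the constant term $y^{s}+c\,y^{1-s}$ of the Eisenstein series $E_{s}$ shows. The two technical points requiring care are the uniform pointwise domination $|h|\le C\,\mathcal{Y}^{s}$, which genuinely uses local boundedness of $h$ rather than merely its growth rate at the cusp, and the bookkeeping of the unfolding $\sum_{k\in\Z}\int_{0}^{1}=\int_{\R}$; note that $h$ is never evaluated on the real axis, only integrated over horocycles, so no continuity up to the boundary is needed.
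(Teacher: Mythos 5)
Your proof is correct and is essentially the paper's argument: your majorant $y^s+\sum_{q,m}\bigl(y/(q^2((x-m/q)^2+y^2))\bigr)^s$ is exactly the Eisenstein series $\sum_{\gamma\in\Gamma_\infty\backslash\Gamma}\imag(\gamma z)^s\propto E_s(z)$, so your pointwise domination $|h|\le C\,\mathcal{Y}^s$ coincides with the paper's $|h|\le C\,E_s$ (obtained from $E_s\ge y^s$ on the fundamental domain plus automorphy), and your unfolding computation of $\int_0^1\mathcal{Y}(x+iy)^s\,dx$ reproduces the constant term $y^s+\frac{\sqrt\pi\,\Gamma(s-\frac12)\zeta(2s-1)}{\Gamma(s)\zeta(2s)}y^{1-s}$ that the paper simply quotes from \eqref{eisenzero}. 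The only additions are points of care the paper leaves implicit (local boundedness on the compact part of the fundamental domain, convergence via $s>1$), which do not change the route.
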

\begin{proof}
Note the inequality  $E_s(x+iy)\ge y^s$ for $s>1$, which comes from dropping all terms with $c\neq 0$ in the definition (\ref{eisenfourier}).  By assumption, the large-$y$ bound states that there is a constant
 $C>0$ such that  $|h(x+iy) | \le Cy^s$ for any $x+iy$ in $\cF$, the standard fundamental domain for $SL(2,\Z)$.
It follows that
$|h(x+iy)|\le C E_s(x+iy)$ in $\cF$, and hence, by automorphy, everywhere in the upper-half plane.  This, together
with the fact that $E_s(x+iy) >0$, implies
\be
|\widehat{h}_n (y)| \ \  \le \ \  C\, \int_0^1 E_s(x+iy) \, dx \ \  = \ \  C\, \left(y^s +
\f{\sqrt \pi \,\Gamma(s-\half) \zeta(2s-1)}{\Gamma(s)\zeta(2s)}
  y^{1-s}\right).
\label{prooforigin}
\ee
Therefore $\widehat{h}_n (y) =O(y^{1-s})$
 as $y\to 0$.
 In the particular case $h=f$ and $s=3$,
the bound
 (\ref{largebound}) then implies (\ref{e:smally}).
\end{proof}

The conditions~\eqref{boundaryconditionsforfn} and~\eqref{e:smally} specify a unique solution to (\ref{e:Diff}).  To be explicit, we observe that
the solution space of the corresponding homogeneous differential equation
\begin{equation}\label{homogde}
  (y^2 \,\partial_y^2 - 12-4\pi^2n^2y^2) \widehat{f}^H_n \ \  = \ \   0
\end{equation}
consists of the two-dimensional space
\begin{equation}\label{homogeneoussolutions}
  \aligned
   \{ \widehat{f}^H_n & \ = \   a \sqrt{  y}\,I_{\sevenh}(2\pi |n|y) \,+\, b \sqrt{  y}\, K_{\sevenh}(2\pi |n|y) \,|\,a,b\in \C\} \, , \ \ n\,\neq\, 0\,, \\ \text{or~~~~~} \ \
\{ \widehat{f}^H_0  & \ = \  a y^4 \, + \, b y^{-3}
\,|\,a,b\in \C\}\,,  \ \ \ \ \qquad\qquad \ n\,=\,0\,,
\endaligned
\end{equation}
where the modified Bessel functions of the third kind are defined by
\begin{eqnarray}
   K_{7\over2}(y)&=& \sqrt{\smallf{\pi}{ 2y}}\, P(y)\, e^{-y}\\ \text{and} \ \ \  \
 I_{7\over2}(y)&=& \smallf{1}{\sqrt{2\pi y}}\, \left(P(-y)\,e^y+P(y)\, e^{-y}\right),
 \label{e:ikseven}
 \end{eqnarray}
 with $ P(y)= 1+\smallf 6y +\smallf{15}{y^2}+\smallf{15}{y^3}$.
The unique expression (\ref{e:fullform}) that satisfies the boundary conditions in the two dimensional solution space to~(\ref{e:Diff}), for $n\neq 0$, can be deduced by noting the following asymptotic behaviour of Bessel functions.  In the $y\to \infty$ limit the relevant functions behave as
\begin{equation}\label{asymptofIK7halves}
  \aligned
\sqrt{y}\,K_{7/2}(2\pi |n| y) &  \ \ = \ \  \f{e^{-2\pi |n| y}}{2|n|^{1/2}}\,(1+O(\smallf{1}{y}))\\
\text{and~~} \ \ \ \sqrt{y}\,I_{7/2}(2\pi |n| y) & \ \ = \ \  \f{e^{2\pi |n| y}}{2\pi|n|^{1/2}}\, (1+O(\smallf{1}{y^2}))\, ,
\endaligned
\end{equation}
so only the $K_{7/2}$ solution satisfies the boundary condition, which means that $a=0$ in~\eqref{homogeneoussolutions}.  The coefficient $b$ of the solution to the inhomogeneous equation is then determined by noting the $y\to 0$ asymptotics
\be
\label{smallyasym}
\sqrt{y}\,K_{7/2}(2\pi |n| y)  \ \ = \ \  \f{15}{16 |n|^{7\over2} \pi^3
  y^3} \ - \ {3\over 8|n|^{3\over2}\pi y} \ + \ O(y)
\ee
and imposing the condition  (\ref{e:smally}) for small $y$, which requires the $y^{-3}$ term in~\eqref{smallyasym} to cancel with a similar term in the particular solution $\widehat{f}^P_n(y)$. The situation for $n=0$ is of course simpler and again has $a=0$, and $b$ determined by asymptotics at the origin.

In order to analyze the particular solutions of~\eqref{e:Diff} we need first to discuss the
Fourier modes of the source term, which can be conveniently broken into a sum of products of Fourier modes  of the nonholomorphic Eisenstein series given in \eqref{eisenzero} and \eqref{nonzeroeisen},
\be
S_{n}(y)   \ \ = \ \   \sum_{\srel{n_1,n_2\,\in\,\Z}{n_1+n_2\,=\,n}} s_{n_1,n_2}(y)\,.
\label{componentsource}
\ee
The $s_{n_1,n_2}$ are naturally divided into the following classes:
\begin{itemize}
\item When $n_1=n_2=0$,
\be
s_{0,0}(y)  \ \ = \ \   -\,(2\,\zeta(3)\,y^{3\over2}\, +\, 4\,\zeta(2)\,y^{-{1\over2}})^2\,.
\label{e:Sn00}
\ee

\item
When either $n_1=0$ and $n_2=n\ne 0$ or $n_2=0$ and $n_1=n\ne 0$,
\begin{equation}
  \label{e:Sn0}
  s_{n,0}(y)  \ \  =  \ \    s_{0,n}(y)  \ \  =  \ \  -\,8\,\pi\,(2\,\zeta(3)\, y^2\, + \, 4\,\zeta(2)) \,
  \, {\sigma_{2}(|n|)\over |n|}\, K_1(2\pi |n|y)\,,
\end{equation}
where $\sigma_2(|n|)=\sum_{ k|n} k^2$, the sum being over positive   divisors.

\item    When $n_1 \ne 0$ and $n_2\ne 0$,
\begin{equation}\label{e:Sn1}
 s_{n_1,n_2}(y)  \ \  = \ \  -\,64\,\pi^2\,y\,
{\sigma_{2}(|n_1|)\,\sigma_{2}(|n_2|)\over |n_1n_2|} \, K_1(2\pi |n_1|y)\,K_1(2\pi |n_2|y)\,.
\end{equation}
\end{itemize}

In parallel with~(\ref{componentsource}), it will be useful to express  $\widehat{f}_n(y)$ as the sum
\begin{equation}\label{fn1n2}
  \widehat{f}_n(y) \ \   = \ \   \sum_{n_1+n_2\,=\,n}\widehat{f}_{n_1,n_2}(y)\,,
\end{equation}
where
\begin{equation}\label{fn1n2sn1n2}
   (y^2 \,\partial_y^2 - 12-4\pi^2(n_1+n_2)^2y^2) \, \widehat{f}_{n_1,n_2}(y) \ \  = \ \    s_{n_1,n_2}(y)\,.
\end{equation}
The space of solutions to this equation is again two dimensional and obviously shares the same homogeneous solutions given in (\ref{homogeneoussolutions}) with $n=n_1+n_2$.  There is an obvious ambiguity breaking apart (\ref{e:Diff}) into a sum of differential equations (\ref{fn1n2sn1n2}):~a homogeneous solution could be simultaneously added to one $\widehat{f}_{n_1,n_2}$ and subtracted from another $\widehat{f}_{n_1',n_2'}$, where $n_1'+n_2'=n_1+n_2$, without affecting the overall sum (\ref{fn1n2}).   To avoid this ambiguity, we shall insist that each $\widehat{f}_{n_1,n_2}$   satisfies  the same growth conditions as $\widehat{f}_n(y)$,
\begin{equation}\label{boundaryconditionsforfn1n2}
\aligned
\widehat{f}_{n_1,n_2}(y)  &  \ \ = \ \  O(y^3) \ \ \ \, \text{for $y$ large}\,,\\
\widehat{f}_{n_1,n_2}(y)  &  \ \ =  \ \   O(y^{-2}) \ \ \text{for $y$ small}\,.
\endaligned
\end{equation}
As before, such solutions are unique and   have the form
\begin{equation}\label{fn1n2sol}
    \widehat{f}_{n_1,n_2}(y)   \ \ = \ \   \widehat{f}^P_{n_1,n_2}(y) \ +
       \alpha_{n_1,n_2}\, \sqrt{y}\,K_{7/2}(2\pi|n_1+n_2|y)\,,
\end{equation}
for any values of $n_1$ and $n_2$,
where  $\widehat{f}^P_{n_1,n_2}$  is a particular solution satisfying the large-$y$ bound $O(y^3)$ and
$\a_{n_1,n_2}$ is the coefficient of the homogeneous solution, which will be determined by   the small-$y$ boundary condition $  \widehat f_{n_1,n_2}(y)=O(y^{-2})$.

We will now determine the explicit solutions for various choices of the integers $(n_1,n_2)$.
These give rise to the following sectors:

  \begin{enumerate}[(i)]
  \item $n_1=n_2=0$;
\item $n_1=0, n_2\ne0$  or  $n_1\ne 0, n_2 =0$;
\item  $n_1n_2>0$;
  \item $n_1n_2<0$;
\item $n_1,n_2\neq0$ and $n=n_1+n_2=0$.
  \end{enumerate}
  The last case is a special case of (iv) but merits separate discussion.
%-------------------------------------------------------------------------
\subsection{Solutions of the equations in distinct sectors of $n_1$ and $n_2$}\hfill\break
\label{sec:solutions}

 {\bf (i) $n_1=n_2=0$}
\medskip

In this case the source term, $s_{0,0}(y)$, is given by the power behaved terms in~\eqref{e:Sn00} and it is easy to see that the solution to~\eqref{fn1n2sn1n2} is
\begin{equation}\label{f00}
  \widehat{f}_{0,0}^P(y)  \ \  = \ \   {2\,\zeta(3)^2\,\over3}\,y^3 \ + \ {4\,\zeta(2)\,\zeta(3)\over3}\,y \
  + \ {4\,\zeta(4)\over y}\,.
\end{equation}
Furthermore, $\alpha_{0,0}=0$ and  $\widehat{f}_{0,0}(y)=\widehat{f}^P_{0,0}(y)$.

The complete zero mode, $\widehat f_{0}(y)$ is given by the sum of  $\hat
f_{0,0}(y)$ and the terms of the form  $\widehat f_{n_1,-n_1}(y)$ that
arise in case~(v), and will be discussed  in section~\ref{completef}.

\bigskip
{\bf (ii) $n_1=0, n_2\ne0$} or  $n_1\ne 0, n_2 =0$
\label{onezero}
\medskip

It is easy to verify by substitution that~\eqref{fn1n2sn1n2} with source term~\eqref{e:Sn0}  has a  particular solution given by
\begin{multline}\label{exactfn0P}
 \widehat{f}_{n,0}^P(y)  \ \  = \ \   \widehat{f}_{0,n}^P(y)  \\  = \ \
   \frac{8 \,\sigma_2(|n|) }{9\,\pi\,|n|^3   }\times
\Big(q_{n,0}^0(\pi |n| y)
K_0(2 \pi |n | y)+ q_{n,0}^1(\pi |n| y)
K_1(2 \pi|n |  y)\Big),
\end{multline}
where the coefficients are given by
\begin{eqnarray}\label{coeffOP}
q_{n,0}^0(z)&=&{1\over z}\,
\left(90\zeta(3)-n^2\pi^4+9z^2\zeta(3)\right) \\
\text{and~} \ \ \
q_{n,0}^1(z)&=&{1\over z^2}\,\left(
90\zeta(3)-n^2\pi^4+54z^2\zeta(3)\right).
\end{eqnarray}
Note that the expression (\ref{exactfn0P}) respects the symmetries
 \begin{equation}\label{symmetriesinindices}
    \widehat{f}_{n_1,n_2}(y) \ \ = \ \ \widehat{f}_{-n_1,-n_2}(y) \ \ = \ \  \widehat{f}_{n_2,n_1}(y)\,.
 \end{equation}
Since  $ \widehat{f}_{n,0}^P(y) \sim -\f{4\sigma_2(|n|)(n^2\pi^4-90\zeta(3))}{9n^6\pi^4}\,y^{-3}$ as  $y\to 0$,
  the coefficient $\a_{n,0}$ of the second term in~(\ref{fn1n2sol}) must be
taken to be
\begin{equation}\label{exactalphan0}
    \a_{n,0}  \ \  =  \ \  \alpha_{0,n}  \ \  = \ \
    \f{64\,\sigma_2(|n|)\,(n^2\pi^4-90\,\zeta(3))}{135
\,|n|^{5\over2}\,\pi}
\end{equation}
in order that complete solution satisfies the boundary condition (\ref{boundaryconditionsforfn1n2}) at the origin.

Thus the  full solution~\eqref{fn1n2sol} given by
\begin{multline}
\label{fn0solution}
  \widehat{f}_{n,0}(y) \ \   = \ \ \widehat{f}_{0,n}(y) \ \   =  \\ \frac{8 \,\sigma_2(|n|) }{9\, \pi
    |n|^3}\times
\Big(q_{n,0}^0(\pi |n| y)
K_0(2 \pi |n | y)+ q_{n,0}^1(\pi |n| y)
K_1(2 \pi |n | y)\Big)\cr
+\f{64\,\sigma_2(|n|)\,(n^2\pi^4-90\,\zeta(3))}{135\,|n|^{5\over2}\,\pi} \,
\sqrt{y}\,K_{7\over2}(2\pi |n|y)
\end{multline}
 behaves as
\begin{equation}\label{asymptsoffn0Pat0}
   \widehat{f}_{n,0}(y) \ \   =  \ \  \widehat{f}_{0,n}(y) \ \  = \ \  \f{4\,\pi^2\,\sigma_2(|n|)}{15\,n^2\, y} \ + \   O(1)
\end{equation}
in the $y\rightarrow 0$ limit.  In the large-$y$ limit the solution behaves as
\begin{equation}\label{asymptsoffn0Patinf}
  \widehat{f}_{n,0}(y)  \ \  = \ \   \widehat{f}_{0,n}(y)  \ \ = \ \
    e^{-2\pi |n| y}\, \times \,\(4\, \sigma_2(|n|)  \,|n|^{-5/2}\,\zeta(3)\,y^{1\over2}   + O(1)\),
\end{equation}
where the exponential suppression has a form  characteristic of a
charge-$n$ D-instanton and the other factors are associated
with the instanton measure. This will be commented upon further in section~\ref{sec:stringy}.

\bigskip
{\bf (iii)  $n_1n_2> 0$}  and {\bf  (iv) }$n_1n_2<0$ with $n_1+n_2\ne 0$
\label{sec:samecond}
\medskip

 Let $\sgn(x)$ denote the sign function and  $H(x)=\f{1+\sgn(x)}{2}$ denote the Heavyside function.
It is easy to check   that an explicit
particular solution to~\eqref{fn1n2sn1n2} with source given
by~\eqref{e:Sn1} is given by the
 bilinear sum
  in $K_0$ and $K_1$ Bessel functions
\begin{multline}\label{exactfn1n2Psamesign}
\widehat{f}_{n_1,n_2}^{P}(y)  \ \ = \\
\f{32\,\pi\,\sigma_2(|n_1|)\,\sigma_2(|n_2|)}{3\,|n_1n_2|\,|n_1+n_2|^5} \sum_{ i,j=0, 1}
q^{ i,j}_{n_1,n_2}(\pi |n_1+n_2|y) \,  K_i(2 \pi |n_1| y)  \, K_j(2 \pi |n_2| y),
\end{multline}
where the matrix coefficients are given by the expressions
\begin{equation}\label{qsame00}
q^{ 0,0}_{n_1,n_2}(z)
=\sgn(n_1n_2)\,
(-4 z \, n_1n_2  \left(n_1 ^2+n_2^2 -6 n_1
   n_2 \right)-\smallf{30}{z}\, n_1 n_2  (n_1 - n_2)^2),
  \end{equation}
\begin{multline}\label{qsame01}
    q^{ 0,1}_{n_1,n_2}(z)   \ \ = \ \ \left(H(n_1n_2)\,+\,H(-n_1n_2)\,\sgn(n_1) \,\sgn(n_1+n_2)\right)\\
\times (
- n_1 \left(13 n_1^2  n_2 -65
   n_1  n_2^2+n_1 ^3+19 n_2^3\right)
+\smallf{30}{z^2} \,n_1  n_2^2 (n_1-n_2))\,,
\end{multline}
\begin{multline}\label{qsame10}
  q^{ 1,0}_{n_1,n_2}(z) \  \ = \ \  \left(H(n_1n_2)\,+\,H(-n_1n_2)\,\sgn(n_2) \,\sgn(n_1+n_2)\right)\\
\times
(- n_2 \left(13 n_2^2  n_1 -65
   n_2  n_1^2+n_2 ^3+19 n_1^3\right)
+\smallf{30}{z^2}\, n_2  n_1^2 (n_2-n_1))\,,
\end{multline}
and
\begin{multline}\label{qsame11}
q^{ 1,1}_{n_1,n_2}(z)  \  \ = \ \
-4 z\, n_1 n_2  \left(n_1^2+n_2^2-6 n_1
   n_2 \right)\\
- {14 n_1^3  n_2 -94  n_1^2 n_2^2+14 n_1  n_2^3+ n_1^4+ n_2^4\over z}\,.
\end{multline}
Imposing the small-$y$ boundary condition on $\widehat{f}_{n_1,n_2}(y)$  in  (\ref{fn1n2sol}) requires
\begin{multline}\label{alphasame}
 \alpha_{n_1,n_2}  \ \  = \ \, \sgn(n_1+n_2)\frac{128 \,\pi\,
    \sigma_2(|n_1|)\,\sigma_2(|n_2|)  }{45\, n_1^2 \, n_2^2\,
    |n_1+n_2|^{7\over2}}  \,
   \Bigl(n_1^5 +n_2^5+15 n_1^4 n_2 +15 n_1   n_2^4\cr
-80 n_1^3
   n_2^2 -80 n_1^2 n_2^3
+60 n_1^2 n_2^2
   (n_1-n_2) \log (|\smallf{n_1}{n_2}|)\Bigr),
\end{multline}
and the resulting $y\to 0$ behaviour of (\ref{fn1n2sol})  is given by
\begin{equation}\label{fn1n2samesignasymp0}
\widehat{f}_{n_1,n_2}(y)   \ \ = \ \    \frac{8\, \sigma_2(|n_1|)\,\sigma_2(|n_2|)}{5 \,n_1^2 \,  n_2^2\,
   y} \ + \ O(1)\,.
\end{equation}

In  sector~(iii), where
$|n_1+n_2|=|n_1|+|n_2|$, the
$y\to \infty$ behaviour  of~(\ref{fn1n2sol})    has the instantonic form
\begin{multline}\label{fn1n2samesignasympinf}
\widehat{f}_{n_1,n_2}(y)  \ \ = \ \      e^{-2\pi|n_1+n_2|y}\,\Big({ \alpha_{n_1,n_2}\over2|n_1+n_2|^{\frac12}}\\ -\f{
 64 \, \pi^2 \,   \sigma_2(|n_1|)\, \sigma_2(|n_2|)}{3 \, |n_1n_2|^{\frac12} }\,{n_1^2+n_2^2-6n_1n_2\over(n_1+n_2)^4}
+O(y^{-1}) \Big)\, .
\end{multline}
In  sector~(iv), where $|n|= |n_1+n_2| <|n_1|+|n_2|$ a qualitatively new feature is that there are an infinite number of values of $n_1$ and $n_2$ having
 a fixed value of  $n=n_1+n_2$.
 Because of this,  the $y\rightarrow\infty$ limit is very different from the large-$y$ limit for  the  $n_1 n_2>0$ case in~\eqref{fn1n2samesignasympinf}
 since the particular solution contains terms that decrease exponentially relative to BPS D-instanton terms.  Explicitly, when $n_1 n_2 <0$ the large-$y$ behaviour is  given by
\begin{equation}\label{fn1n2oppsignasympinf}
\aligned
 \widehat{f}_{n_1,n_2}(y)  \ \  = \ \ &   \alpha_{n_1,n_2}\,\sqrt{y}\,K_{7/2}(2\pi|n_1+n_2|y) \\
&  \ \ \ \ - \   e^{-2\pi(|n_1|+|n_2|)y}\left(\f{
  \sigma_2(|n_1|)\,\sigma_2(|n_2|)}{  |n_1 n_2|^{\frac52}\,y^2 }\  +\ O(y^{-3})\right)
 \\
 = \ \   &   e^{-2\pi|n_1+n_2|y} \, { \alpha_{n_1,n_2}\over2|n_1+n_2|^{\frac12}}(1+O(y^{-1})) \\
&  \ \ \ \ - \   e^{-2\pi(|n_1|+|n_2|)y}\left(\f{
  \sigma_2(|n_1|)\,\sigma_2(|n_2|)}{  |n_1 n_2|^{\frac52}\,y^2 } \ + \ O(y^{-3} )\right).\\
\endaligned
\end{equation}
The second term in   either expression  can be more exponentially damped than the first term as  $n_1$ or
$n_2$ increases with  $n=n_1+n_2$ held fixed.

\bigskip
{\bf  (v) $n_1,n_2\ne 0$} with  $n=n_1+n_2=0$
\medskip

This is a special case of (iv) and the particular solution can now be
obtained by carefully considering the limit $n_2=- n_1 + \epsilon$
with $\epsilon\to 0$ in~\eqref{exactfn1n2Psamesign}.  Superficially,
the presence of the $|n_1+n_2|^{-5}$ factor there suggests that this
limit gives a badly divergent result.  However, there are massive
cancelations   caused by properties of the $K$-Bessel functions and the
resulting limit simplifies to be
\begin{equation}\label{n1isnegn2}
\widehat{f}_{n_1,-n_1}^{P}(y)   \ \  =   \ \  \frac{32\,\pi\,
  \sigma_2(|n_1|)^2}{315\,|n_1|^3}
\sum_{i,j=0,1}  r^{i,j}(\pi |n_1| y) \, K_i(2\pi |n_1|y)\, K_j(2\pi|n_1|y)
\end{equation}
where the coefficient matrix, $r^{ij}$, has components
\begin{eqnarray}
r^{0,0}(z)&=& z \left(-512 z^4+48 z^2-15\right) \nonumber \\
r^{0,1}(z)&=&r^{1,0}(z) \ \ = \ \  -   \left(128 z^4+12 z^2+15\right) \label{n1isnegn2det} \\
\nonumber r^{1,1}(z)&=&z^{-1}\left(512 z^6+16 z^4+33z^2-15\right).
\end{eqnarray}
The solution of the homogeneous equation solution can also be obtained by setting $n_2 = -n_1+\epsilon$ in $\alpha_{n_1,n_2}$ and considering the limit $\epsilon\to 0$, which leads to
\be
\label{limzeroone}
    \lim_{n_2\to -n_1}   \alpha_{n_1,n_2}\,
\sqrt{y}\,K_{7/2}(2\pi|n_1+n_2|y)  \ \ = \ \  \frac{8\, \sigma_2(|n_1|)^2}{21\,
 n_1^6\, \pi^2\, y^3}\,.
    \ee
In order to verify that the full solution
\begin{equation}\label{n1isnegn2sol}
     \widehat{f}_{n_1,-n_1} (y)  \ \ = \ \  \widehat{f}_{n_1,-n_1}^{P}(y)
\  + \
     {8\,\sigma_2(|n_1|)^2\over 21 \,n_1^6 \,\pi^2\, y^3}\,
\end{equation}
 satisfies the $y=0$ boundary condition, we also note that for small $y$
\be
\label{limzerotwo}
 \widehat{f}_{n_1,-n_1}^{P}(y)  \ \     = \ \
-\frac{8\,\sigma_2(|n_1|)^2}{21\, n_1^6\, \pi^2 \, y^3} \ + \
\frac{8\,\sigma_2(|n_1|)^2}{5\, n_1^4\,  y}  \ +\ O(1)\,.
\ee
Therefore, it  follows from~\eqref{n1isnegn2sol} that at small $y$ the full solution for the $(n_1,-n_1)$ mode is
\begin{equation}\label{n1isnegn2asymptatinf}
 \widehat{f}_{n_1,-n_1} (y) \ \   =  \ \   \frac{8 \,\sigma_2(|n_1|)^2 }{5\, |n_1|^4  \, y} \ + \ O(1)\,,
\end{equation}
and  at large  $y$ it is
\begin{equation}\label{n1isnegn2asymptatinf}
   \widehat{f}_{n_1,-n_1} (y)  \ \  = \ \   \f{8\,\sigma_2(|n_1|)^2}{21\,
     n_1^6\, \pi^2\, y^3}  \  - \  e^{-4\pi |n_1| y}\Bigl(\frac{\sigma_2(|n_1|)^2}{|n_1|^5\, y^2 } + O(y^{-3})\Bigr).
\end{equation}
Note that the power behaved term proportional to $1/y^3$ was uncovered by a different method in~\cite{Green:2005ba}  and is interpreted as a genus-three contribution to the amplitude in string theory perturbation theory.  The exponentially decaying term is characteristic of the contribution of a charge-$(n_1,-n_1)$ D-instanton/anti D-instanton pair.

%-------------------------------------------------------------------------
\subsection{The complete expression for each Fourier mode, $\widehat{f}_n(y)$}\hfill\break
\label{completef}

Having determined the expressions for
 $\widehat{f}_{n_1,n_2}(y)$ we
shall now study the $n$-th mode $\widehat{f}_n(y)$, which we recall was
 given in~\eqref{fn1n2} as  the sum of   $\widehat{f}_{n_1,n_2}(y)$
 over $n_1$ and $n_2$ with  $n_1+n_2 = n$.   We
 first note that by (\ref{f00}) and the explicit formulas for each
 $\widehat{f}_{n_1,n_2}$ given in section~\ref{sec:solutions}, the
 $SL(2,\Z)$-invariant function
 $f(\Omega)-\f{2\zeta(3)^2}{3}E_3(\Omega)$ is $O(y)$ for $y$ large.
 Applying lemma~\ref{lem:smally}, we conclude that its Fourier
 coefficients $\widehat{f}_n(y)-\f{2\zeta(3)^2}{3}{\mathcal
   F}_{n,3}(y)$ obey the bound $O(y^{-\e})$ for any fixed positive
 real number $\e>0$.   Using formulas
   \eqref{eisenzero} and~\eqref{nonzeroeisen}, this  gives the asymptotic statement
 \be
  \widehat{f}_n(y)  \ \ =  \ \ {945\,
  \zeta(3)^2\,\sigma_{-5}(|n|)\over 4\,\pi^5} {1\over y^2} \  + \ O(y^{-\e})\,,
\label{nonzerolimit}
\ee
 again for any fixed $\e>0$.  In the case $n=0$, $\sigma_{-5}(|n|)$ should be interpreted as $\zeta(5)$.  The error term can be slightly improved using the Kronecker limit formula, though this will not be important for our purposes.
Note that even though each term in  \eqref{fn1n2}  satisfies the small-$y$ bound $O(y^{-1})$, their aggregate sum diverges like $y^{-2}$ in (\ref{nonzerolimit}).

\medskip
\noindent{\bf The constant term:} The $n=0$ mode is given by
 \be
\widehat{f}_0(y) \ \  = \ \
 \widehat{f}_{0,0}(y) \  +  \ \sum_{n_1\ne 0} \widehat{f}_{n_1,-n_1}(y)\,.
\label{zerodef}
\ee
The sum of the second term in ~\eqref{n1isnegn2sol} over all nonzero integers $n_1$ is
\begin{equation}\label{gettinginversecubed1}
  \f{16}{21\,\pi^2\, y^3}  \sum_{m>0}{\sigma_2(m)^2\over m^6}
\ \   = \ \   \f{16}{21\,\pi^2\, y^3}\,  \f{\zeta(6)\,\zeta(4)^2\,\zeta(2)}{\zeta(8)}
\ \  = \ \  \f{4\,\zeta(6)}{27\, y^3}\,,
\end{equation}
where we have used the Ramanujan identity
\begin{equation}
  \sum_{m\,=\,1}^\infty    {\sigma_{t}(m)\,\sigma_{t'}(m)\over   m^r} \ \ = \ \
  {\zeta(r)\,\zeta(r-t)\,\zeta(r-t')\,\zeta(r-t-t')\over \zeta(2r-t-t')}\,.
  \label{eq:ramanujanid}
\end{equation}
As a result of this and (\ref{f00}), we can write the complete solution for the zero mode  as
\be
\widehat{f}_0(y) \ \  =  \ \  {2\,\zeta(3)^2\over3}\,y^3\ +
 \ {4\,\zeta(2)\,\zeta(3)\over3}\,y \ + \
{4\,\zeta(4)\over y}  \ + \   \f{4\,\zeta(6)}{27\, y^3} \  + \   \sum_{n\ne 0} \widehat{f}^{P}_{n,-n}(y)\,,
\label{zeromode}
\ee
where the expression for $\widehat{f}^P_{n,-n}(y)$ is given
in~\eqref{n1isnegn2} and is exponentially suppressed as $y\to \infty$.
The behaviour as $y\to 0$ is more subtle since the sum in
\eqref{fn1n2} does not commute with the
small-$y$ limit, and was given above in (\ref{nonzerolimit}).
 A finer asymptotic expansion can be obtained using Mellin transform methods.

\medskip
\noindent{\bf The non-zero Fourier modes:}
Modes with $n \ne 0$ get contributions from the sectors labelled (ii), (iii) and (iv),  so that,
 \begin{equation}
    \label{e:modesn}
 \widehat{f}_n(y)  \ \ = \ \  \widehat{f}_{n,0}(y) \  + \
 \widehat{f}_{0,n}(y)  \ + \
 \sum_{n_1\,=\,1}^{n-1} \widehat{f}_{n_1,n-n_1}(y) \  + \  2 \sum_{n_1\geq
   n+1} \widehat{f}_{n_1,n-n_1}(y)\,.
 \end{equation}
 It is first  important to verify that the last sum is  convergent.   This
 involves an estimate of the behaviour of its terms as $|n_1|
 \to \infty$, which arises  in case (iv).
 The $K_i(2\pi |n_1|y) K_j(2\pi |n-n_1|y)$ terms in the $n_1$ sum (coming from (\ref{exactfn1n2Psamesign}) are
 exponentially suppressed as $|n_1|$ gets large.  Furthermore, for fixed $n$, an analysis of formula (\ref{alphasame}) shows that $\alpha_{n_1,n-n_1}=O(n_1^{-6})$ as $n_1\rightarrow\infty$.  Thus the terms coming from the homogeneous solutions $\alpha_{n_1,n_2}\sqrt{y}K_{7/2}(2\pi|n_1+n_2|y)$  also converge because the sum
 $\sum_{n_1=-\infty}^\infty \alpha_{n_1,n-n_1}$  is finite.

The leading behaviour in the weak coupling limit  $y\to\infty$   has the form
\begin{eqnarray}
    \widehat{f}_n(y)  \ \ = \ \  e^{-2\pi |n| y}\(8\,\f{\sigma_2(|n|)}{|n|^{5/2}}\,\zeta(3)\,y^{1/2}   + O(1)\),
\end{eqnarray}
which is dominated by the
behaviour of  $\widehat{f}_{n,0}$ and  $\widehat{f}_{0,n}$.  The behaviour for small $y$ was given in (\ref{nonzerolimit}).   It is also possible to study these asymptotics using the explicit formulas for $\widehat{f}_{n_1,n_2}$ given in section \ref{sec:solutions}, or from an analysis of (\ref{partialsum}) (which gives an alternative description of the terms in~\eqref{e:modesn}).  See also formula (\ref{kloosterfourier}), which gives yet another formula for $\widehat{f}_n(y)$.

%-------------------------------------------------------------------------

\section{Discussion and  connections with string theory}
\label{sec:stringy}

The motivation for considering the differential equation ~\eqref{laplaceeigen} from~\cite{Green:2005ba}  was based on considering the compactification of the two-loop Feynman diagrams
of  the four-graviton amplitude of eleven-dimensional supergravity on $T^2$, in the zero-volume limit, $\cV
\to 0$.  The first non-leading term in the
low-energy expansion  of this amplitude was argued in~\cite{Green:2005ba} to give  the effective type~IIB string theory  interaction $f(\Omega)\, D^6R^4$,
with $f(\Omega)$ satisfying~\eqref{laplaceeigen}.  In this paper we have determined the exact solution for all the Fourier modes  $\widehat f_n(y)$ from (\ref{fouriermodes}).

The zero mode $\widehat f_{0}(y)$ \eqref{zeromode} possesses four
terms that are power behaved in $ y$ that were originally discussed in
detail in~\cite{Green:2005ba}.  The coefficients of these powers are
rational numbers multiplying products of zeta values.  The values of
these coefficients should agree with explicit perturbative string
theory calculations up to genus three.  The genus zero and genus one
string results were known to agree at the time of publication
of~\cite{Green:2005ba}.  The genus-two contribution has been
related in~\cite{D'Hoker:2013eea} to the integral of an invariant
introduced in~\cite{Zhang,Kawazumi}, which has also recently been
evaluated \cite{D'Hoker:2014} and agrees with the genus-two term
(the $y^{-1}$ term in \eqref{zeromode}).  The genus-three part (the
$y^{-3}$ contribution in~\eqref{zeromode}) agrees precisely with the
prediction for that term in the type IIA theory, that arises from the
expansion of the one-loop eleven-dimensional supergravity amplitude
compactified on a circle~\cite{Green:1999pu}.  Although a recent
genus-three string theory calculation~\cite{Gomez:2013sla} also apparently
reproduces this value for the coefficient of the  $y^{-3}$ contribution, there are
currently some questions concerning technical details of the calculation.

In solving for the modes $\widehat f_n(y)$, it was important to
understand the nature of the boundary conditions at $y=\infty$ and
$y=0$.  Although the condition at large $y$ (the weak coupling regime)
is simply that no term can be more singular than $y^3$, which is the
power corresponding to tree-level perturbation theory, the condition
at $y=0$ is more subtle.  We showed in lemma~\ref{lem:smally} that the
necessary condition is that $ \widehat f_n(y) = O(y^{-2})$ in the
limit $y\to 0$, which follows as consequence of $SL(2,\Z)$ invariance
together with the $y\to \infty$ bound, $\widehat{f}_n(y) =O(y^3)$.
This is a highly non-trivial condition, in that it implies that the
infinite series of terms that manifests itself as a series of
exponentially decreasing D-instanton and anti D-instanton
contributions at large $y$, simultaneously conspires to cancel a
singular term in $\widehat f_n(y)$ at small $y$.  This bears some
similarity to the behaviour of the metric on the Coulomb branch of
three-dimensional ${\mathcal N}=4$ supersymmetric $SU(2)$ Yang--Mills
theory with no flavour fields in Seiberg-Witten
theory~\cite{Seiberg:1996nz} (see also \cite{Dorey:1997ij}). In that
case, the expansion of the moduli space metric at large values of the
Higgs field also gets contributions from an infinite series of
exponentially suppressed terms~\cite{Gibbons:1995yw}, but the solution
can be uniquely determined by requiring the Coulomb branch metric to
be non-singular at the origin\footnote{It has been suggested that the
  series of exponentially suppressed terms might be interpreted as
  instanton/anti-instanton
  contributions~\cite{Hanany:2000fw}. However, the identification of
  the radial coordinate in the Atiyah-Hitchin metric with the
  corresponding scalar vacuum expectation value in the explicit
  semi-classical solution is ambiguous.  Owing to the high degree of
  supersymmetry in our case, it is not possible to redefine the
  modular parameter $\Omega$ without losing $SL(2,\Z)$ invariance, so
  this ambiguity is not present.}.

The expressions for the Fourier modes   contain detailed information concerning the instanton-like  contributions that decrease exponentially at large $y$.   Such  terms that have the form expected of contributions arising from D-instantons,  anti D-instantons and D-instanton/anti D-instanton pairs.    This is explicit in the large-$y$ limits given in~\eqref{n1isnegn2}  for the terms contributing to $\widehat f_0(y)$ and in~\eqref{asymptsoffn0Patinf},~\eqref{fn1n2samesignasympinf} and~\eqref{fn1n2oppsignasympinf} for the terms contributing to $\widehat f_n(y)$.   In particular,~\eqref{n1isnegn2asymptatinf} shows that the constant term, $\widehat f_0(y)$,  has an infinite series of exponentially decreasing terms in the large-$y$ limit, which have exponential factors $e^{-4\pi |n|y}$ that have the form which would arise from a D-instanton/anti D-instanton pair with charges $n$ and $-n$. Furthermore, the measure contains the square of the divisor sum $\sigma_2(|n|)$,
\be
\label{zeromeasure}
e^{-4\pi |n|y}  \,\f{\sigma_2(|n|)^2}{|n|^5}\,{1\over y^2} \,.
\ee
Since the measure for a single charge-$n$ D-instanton contains a single power of a divisor sum, this is another indication that terms of this form in $\widehat f_0(y)$ might be identified with  D-instanton/anti D-instanton pairs. Such instanton/anti-instanton terms should break all supersymmetries, giving rise to extra fermionic zero modes.  Soaking these up should ought to account for the fact that they are suppressed by the factor of $1/y^2$ in~\eqref{zeromeasure}, although we have not determined such factors  in the measure from an explicit D-instanton calculation.

The exponentially suppressed terms that  contribute to $\widehat f_{n,0}$ and $\widehat f_{0,n}$ with $n\ne 0$ might be interpreted as contributions of  single charge-$n$ D-instantons or charge-$n$ anti D-instantons with a measure that can be read off from~\eqref{asymptsoffn0Patinf},
\be
\label{halfbps}
e^{-2\pi |n| y}\left(\f{4\sigma_2(|n|)}{|n|^{5/2}}\,\zeta(3)\,y^{1/2}  + O(1)\right) ,
\ee
which has a factor of $y^{5/2}$ relative to~\eqref{zeromeasure}.  Likewise, the large-$y$ contribution to $\widehat
f_n(y)$ with $n=n_1+n_2$ and   ${\rm sign}(n_1)={\rm sign} (n_2)$, obtained in~\eqref{fn1n2samesignasympinf} has the form
\be
 e^{-2\pi|n_1+n_2|y}\, \sigma_2(|n_1|)\,\sigma_2(|n_2|)
 \times ({\rm function\ of}\ n_1, n_2) \,,
\label{interfn}
\ee
which has a power of $y^0$.

It would be desirable to understand the particular powers of $y$ in the prefactors of \eqref{zeromeasure}, \eqref{halfbps} and \eqref{interfn} in terms of the zero modes associated with supersymmetry breaking,  but we have not understood this in a systematic manner.

In any case, given the  non-standard application of M-theory/string theory duality that motivated \eqref{laplaceeigen}, we would like to determine whether this equation accurately describes the coefficient of the $D^6R^4$ interaction beyond the checks outlined above.
Further motivation for this equation and its generalisation to higher-rank duality groups was obtained in~\cite{Green:2010wi,Green:2010sp,Basu:2007ck} in considering properties of  the low energy effective action of type II string theory in lower dimensions obtained by toroidal  compactification to dimension $D$.  In these cases the coefficient of the $D$-dimensional $D^6 R^4$ interaction, $f^{(D)}$, is a function of the moduli associated with the  $E_{11-D}(\Z)$ duality group\footnote{Recall that the duality groups of rank $\le 8$   are specific real split forms of  $SL(2,\Z)$, $SL(3,\Z)\times SL(2,\Z)$, $SL(5,\Z)$, $Spin(5,5,\Z)$, $E_6(\Z)$, $E_7(\Z)$, $E_8(\Z)$.}.   Equation
 \eqref{laplaceeigen}  then  generalises to an inhomogeneous Laplace eigenvalue equation \cite{GMRV}
\begin{equation}
\left( \Delta^{(D)} -{6(14-D) (D-6)\over D-2} \right)\,f^{(D)}=-\left(\cE_\threeh^{(D)}\right)^2 + 120\,\zeta(3)\,  \delta_{D-6,0}\,,
\label{laplaceeigenthree}
\end{equation}
where $\Delta^{(D)}$is the laplacian on the homogeneous space and $\cE_s^{(D)}$ is the maximal parabolic Langlands Eisenstein series attached to the parabolic associated with the first node of the Dynkin diagram
(which is the coefficient of the $R^4$ interaction in $D$ dimensions).
The constant terms in various parabolic subgroups were analysed  to a
certain extent for the cases with $D\ge 6$  in \cite{Green:2010wi,
  Green:2010sp}  and for  $D=3$ in \cite{GMRV}, and agreed with
expectations based on perturbative string theory calculations.  This
has also been extended to the cases of
  $D=1$ and 2 in \cite{Fleig:2012xa}.  The analysis of the non-zero Fourier modes  presents new challenges that extends the considerations of \cite{GMV}, which considered the maximal parabolic Langlands Eisenstein series that arise as coefficients of the $R^4$ and $D^4R^4$ interactions.
The four dimensional version of \eqref{laplaceeigenthree} has also
received support from consideration of  the
soft scalar limits of $\mathcal N=8$ supergravity amplitudes in four dimensions~\cite{Beisert:2010jx} .

 Since the natural region of validity of perturbative supergravity is
 ${\mathcal V}\gg \ell_{11}^2$ it is not obvious why the M-theory
 argument that leads to $f(\Omega)$  should be a good approximation to
 the exact answer.  However, in common with analogous duality
 arguments for BPS quantities, the fact that the $D^6R^4$ interaction
 is $\eighth$-BPS seems to justify what would  otherwise be an
 outrageous continuation in ${\mathcal V}$.   In considering higher
 order interactions in the low energy expansion there is no reason,
 based on our current understanding, for expecting  such a
 continuation from large to small ${\mathcal V}$ to be valid.
  Nevertheless, it might be  of interest to analyze the structure
  of the compactified Feynman diagrams of eleven-dimensional
  supergravity further, if only to find inspiration for the possible
  mathematical structure of higher order terms.
The paper \cite{Green:2008bf} contains a detailed discussion of  higher order corrections to the low energy
expansion, that arise by expanding the two-loop four-graviton amplitude
of eleven-dimensional supergravity to higher orders beyond the $D^6R^4$ interaction studied in this paper.   This does not yield any contributions that survive
the $r_B\to \infty$ limit  to $D=10$ dimensions, but  does give
contributions that may be useful at finite values of  $r_B$ (i.e., in
the $D=9$ type~IIB theory).  Even though the analysis
in~\cite{Green:2008bf} is not the complete story,
the equations that emerge from the higher order expansion of the
two-loop amplitude   suggest that~\eqref{laplaceeigen} is
a specially simple example of a more general set of equations for the higher-order coefficients.

\bigskip
{\bf Acknowledgments}

We are particularly grateful for discussions with Don Zagier during
early stages of this work, and to Axel Kleinschmidt and Boris Pioline for their helpful comments.  We are also grateful to Nick Dorey, Dorian
Goldfeld, Savdeep Sethi, and David Tong  for clarifying discussions.

The research leading to these results has received funding from the
European Research Council under the European Community's Seventh
Framework Programme (FP7/2007-2013) / ERC grant agreement
no. [247252],  and the ANR grant   reference QST ANR 12 BS05 003
01, and the PICS  6076.  SDM was supported by NSF grant DMS-1201362.
We are also grateful to the Simons Center, Stony Brook where part of this research was undertaken in October 2013.   MBG is also grateful to the Aspen Center for Physics for its hospitality  in the Summer of 2013.

\appendix

\section*{Appendices:~Other methods to solve the inhomogeneous Laplace equation~(\ref{laplaceeigen})}

In the following three appendices we will briefly describe other approaches to constructing solutions to (\ref{laplaceeigen}). None of them is completely satisfactory, though each reveals different information about the solution; further variants of these might be useful in studying the $D^6R^4$ coefficient in lower dimensions (equivalently, for higher rank groups).

\section{The solution as a Poincar\'e Series}
\label{sec:poincaremeth}

The procedure used to determine the solutions for
  $\widehat{f}_n(y)$  in section~\ref{sec:fouriertransform} has the drawback that
  the modularity properties of the complete solution $f(\Omega)$ in (\ref{fouriermodes})
   are obscured in the mode-by-mode analysis.  In particular,  the values of the coefficients
     $\alpha_{n_1,n_2}$ in \eqref{fn1n2sol} were determined only by invoking a boundary
     condition at $y=0$.  This appendix presents a solution for $f(\Omega)$ in terms of
      a Poincar\'e series  that makes both the $SL(2,\Z)$-invariance  and growth conditions
      of each $\widehat{f}_{n_1,n_2}(y)$ manifest.  The results of this section should be viewed as complementary to those of section~\ref{sec:fouriertransform}, where the modes were expressed very explicitly in terms of products of $K$-Bessel functions (whereas here they will be given in terms of integrals).

\subsection{Sum over translates of one-dimensional solution}

Squaring the definition of $E_{3/2}$  given in~\eqref{eisenfourier}, we write
\begin{equation}\label{E32squared}
\aligned
  E_{3\over2}(\Omega)^2 \ \ &  = \ \ \sum_{\g_1,\g_2\,\in\,\G_\infty\backslash \G}\im\!(\g_1\Omega)^{3/2}\,\im\!(\g_2\Omega)^{3/2}
   \\ & = \ \ E_3(\Omega) \ + \  \sum_{\g_1\neq \g_2\,\in\,\G_\infty\backslash \G}\im\!(\g_1\Omega)^{3/2}\,\im\!(\g_2\Omega)^{3/2} \\
   & = \ \ E_3(\Omega) \ + \  \sum_{\g\,\in\,{\mathcal S}} (\det \g)^{-3}\, \calT(\g\Omega)\,,
  \endaligned
\end{equation}
where $\G=SL(2,\Z)$,  $\G_\infty =\{\ttwo{\pm 1}{\star}{0}{\pm 1} \in SL(2,\Z)\}$,
${\mathcal S}=\{\pm 1\}\backslash \{\ttwo{m_1}{n_1}{m_2}{n_2}\in M_2(\Z)\cap GL^+(2,\IR)|\gcd(m_1,n_1)=\gcd(m_2,n_2)=1\}$, and
\begin{equation}\label{poincare1}
\calT(\Omega)  \ \ =  \ \   \calT(x+iy) \ \ :=  \ \ \sigma(\smallf xy)\,, \ \ \sigma(u)   \ = \   (u^2+1)^{-3/2}.
\end{equation}
Indeed, this can easily be seen using the calculations
\begin{equation}\label{poincare7aa}
    \f{\Re\g\Omega}{\im\g\Omega} \ \ = \ \ \f{n_1n_2+m_2n_1 x + m_1n_2 x + m_1m_2(x^2+y^2)}{y\,\det\g}
\end{equation}
and
\begin{multline}\label{poincare7b}
    \((n_1n_2+m_2n_1 x + m_1n_2 x + m_1m_2(x^2+y^2))^2+(y\det \g)^2\)^{-3/2} \\ = \ \
    \(|m_1z+n_1|^2|m_2z+n_2|^2\)^{-3/2}.
\end{multline}
We shall use the fact that $\mathcal T(x+iy)$ only depends on the single parameter $\f xy$ to construct a solution to (\ref{laplaceeigen}) from (\ref{E32squared}).

Consider the differential equation
\begin{equation}\label{poincare4}
  \left(\f{d}{d u}\, \left((1+u^2)\,\f{d}{d u} \right)-12\right)h(u)  \ \  = \ \   -\,\sigma(u)  \,,
\end{equation}
 where the differential operator on the lefthand side corresponds to $\Delta_\Omega-12$ acting on functions of the ratio $u=\f xy$.

\begin{lem}\label{inhomtosigma}
The function\footnote{ The first term of the
  solution~\eqref{poincare3}  (which solves (\ref{poincare4}))    was essentially obtained in~\cite{Green:2005ba},
   but in that
   reference  the second term  (which solves the homogeneous version of (\ref{poincare4}))
   was instead a function of  the two variables  $x$ and $y$ rather than only of their
   ratio. The combination of terms in~(\ref{poincare3}), which was pointed out to us by
Don Zagier, has asymptotic behaviour that guarantees convergence of the sums that arise in (\ref{poincare7old}), whereas the expression used in \cite{Green:2005ba}  leads to a divergent result.}
\bea
h(u) \   =   \ {7+44u^2+40u^4\over3\sqrt{1 + u^2}} \, - \, \frac{16}{3\pi}\left(\frac{4}{3} + 5 u^2 + u (3 + 5 u^2) \tan^{-1}(u)\right)
\label{poincare3}
\eea
is the unique smooth, even function satisfying both (\ref{poincare4}) and the
 decay condition $h(u)\sim \f{1}{6|u|^{3}}$ as $u\rightarrow\pm\infty$.  It furthermore
  extends to a holomorphic function on $\C-\{iv\,|\,|v|\ge 1\}$, with jump discontinuities along
  these branch cuts given by
\begin{multline}\label{branchposneg}
  \lim_{u\rightarrow 0^+}h(u+iv)  \ - \   \lim_{u\rightarrow 0^-}h(u+iv) \ \ = \\ \left\{
                                                                               \begin{array}{ll}
                                                                                 {2 i\over3} \left(8 v \left(5 v^2-3\right)-{40 v^4-44 v^2+7\over\sqrt{v^2-1}}\right), & v > 1 \\
                                                                                { 2 i\over3} \left(8 v \left(5 v^2-3\right)+{40 v^4-44 v^2+7\over\sqrt{v^2-1}}\right), & v<-1.
                                                                               \end{array}
                                                                             \right.
\end{multline}
\end{lem}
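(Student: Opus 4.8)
The plan is to verify the three claims in Lemma~\ref{inhomtosigma} --- that $h(u)$ as given in~\eqref{poincare3} solves~\eqref{poincare4} with the stated decay, that it is the unique such smooth even function, and that it extends holomorphically with the prescribed branch cuts --- essentially by direct computation, organizing things so the routine verifications are transparent.

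\textbf{Step 1: the ODE and decay.} First I would substitute~\eqref{poincare3} into~\eqref{poincare4} and check it is a solution. Writing $L = \frac{d}{du}\bigl((1+u^2)\frac{d}{du}\bigr) - 12$, one notes $L$ annihilates both $y^4$- and $y^{-3}$-type homogeneous solutions, which in the variable $u=x/y$ become the two solutions of $L h_0 = 0$; indeed the substitution $u = \cot\theta$ or inspection shows the homogeneous solutions are polynomial-times-algebraic expressions, one of which is $(1+u^2)^2$ (degree $4$, matching $y^4$) — so I would identify $h^{\rm hom}_1(u)=(1+u^2)^2$ and find the second homogeneous solution by reduction of order. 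Then $-\sigma(u)=-(1+u^2)^{-3/2}$ is a natural right-hand side for which one seeks a particular solution of the form $(\text{polynomial})\cdot(1+u^2)^{-1/2}$; plugging in $P(u)(1+u^2)^{-1/2}$ and matching gives the first term $\frac{7+44u^2+40u^4}{3\sqrt{1+u^2}}$. The second term of~\eqref{poincare3} is then a specific homogeneous solution (one checks $\tfrac43+5u^2+u(3+5u^2)\arctan u$ is annihilated by $L$ --- the $\arctan$ piece is exactly the reduction-of-order partner of $(1+u^2)^2$), chosen so the combined $h(u)$ has the decay $h(u)\sim\frac{1}{6|u|^3}$: as $u\to\pm\infty$, the leading $u^4$ and $u^3\arctan u \sim \pm\frac{\pi}{2}u^3$ terms must cancel against $\frac{40u^4}{3|u|}$, and I would check the cancellation of the $|u|^3$, $|u|$, and $|u|^{-1}$ coefficients leaves precisely $\frac{1}{6|u|^3}$.

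\textbf{Step 2: uniqueness.} Smoothness at $u=0$ plus evenness is the key constraint. The two homogeneous solutions of $L h_0 = 0$ are $(1+u^2)^2$ (smooth, even) and its reduction-of-order partner, which involves $\arctan u$ times a cubic plus a quadratic --- this second solution, while also smooth and even is \emph{not} bounded by $|u|^{-3}$ at infinity; rather it grows like $u^3$. So among all solutions $h_{\rm part} + a\,(1+u^2)^2 + b\,(\text{partner})$, smoothness forces nothing further, but the decay condition $h(u)=O(|u|^{-3})$ kills the $u^4$, $u^3$, $u^2$, $u$, $u^0$, $u^{-1}$, $u^{-2}$ growth, which is more than enough linear conditions to fix $a$ and $b$ uniquely. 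Alternatively, and more cleanly: if $h_1,h_2$ both satisfy~\eqref{poincare4} and the decay, then $h_1-h_2$ is a smooth even homogeneous solution that is $O(|u|^{-3})$; but any nonzero homogeneous solution is a combination $a(1+u^2)^2 + b\cdot(\text{partner})$, and imposing both evenness and $O(|u|^{-3})$ at infinity forces $a=b=0$ (the partner's $\pm\frac{\pi}{2}u^3$ asymptotics cannot be cancelled by $a(1+u^2)^2$). I would write this out as the argument.

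\textbf{Step 3: holomorphic extension and branch cuts.} Here $h(u)$ is manifestly a rational function of $u$, $\sqrt{1+u^2}$, and $\arctan u = \frac{1}{2i}\log\frac{1+iu}{1-iu}$. The function $\sqrt{1+u^2}=\sqrt{(1-iu)(1+iu)}$ has branch points at $u=\pm i$, and $\arctan u$ has branch points at $u = \pm i$ as well; choosing branch cuts running from $i$ to $i\infty$ and from $-i$ to $-i\infty$ (i.e.\ along $\{iv : |v|\ge 1\}$) makes $h$ holomorphic on $\C\setminus\{iv:|v|\ge1\}$. For the jump formula~\eqref{branchposneg} I would set $u = u' + iv$ with $v>1$ (resp. $v<-1$) and compute the discontinuity as $u'\to 0^\pm$: across the cut, $\sqrt{1+u^2}$ changes sign and $\arctan u$ jumps by $\pm i\pi$ (from the logarithm winding), so the discontinuity of $h$ is $(-2)\cdot\frac{40v^4-44v^2+7}{3\sqrt{v^2-1}}$ from the algebraic term (with appropriate sign of $\sqrt{v^2-1}$ on the two sides of the cut, and the factor $i$ from $\sqrt{1+(iv)^2}=\sqrt{1-v^2}=i\sqrt{v^2-1}$ for $v$ near the cut from one side) plus $-\frac{16}{3\pi}\cdot(\pm i\pi)(3v+5v^3)\cdot$ --- tracking constants carefully reproduces $\frac{2i}{3}\bigl(8v(5v^2-3)\mp\frac{40v^4-44v^2+7}{\sqrt{v^2-1}}\bigr)$, with the sign of the algebraic term flipping between $v>1$ and $v<-1$ because of how the cut is approached.

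\textbf{Main obstacle.} The genuinely delicate part is Step~3: getting every sign and branch correct in the jump-discontinuity formula~\eqref{branchposneg}, since it hinges on a consistent choice of branch for both $\sqrt{1+u^2}$ and $\log\frac{1+iu}{1-iu}$ across the two half-line cuts, and the final answer has the algebraic term entering with \emph{opposite} signs on the two cuts while the polynomial term enters with the same sign --- this asymmetry is exactly the subtle bookkeeping one must not botch. Steps 1 and 2 are routine once the homogeneous solutions are in hand; I expect to spend most of the write-up justifying the branch choices and the convergence-relevant asymptotics $h(u)\sim\frac{1}{6|u|^3}$ that the lemma is really designed to deliver (this being what makes the Poincaré series~\eqref{E32squared} converge after applying the operator).
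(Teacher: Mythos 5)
The paper states lemma~\ref{inhomtosigma} without proof (the formula is credited to Zagier in a footnote), so there is no argument of the authors to measure yours against; your overall strategy --- a particular solution of the form $(\text{polynomial})\cdot(1+u^2)^{-1/2}$, a homogeneous correction tuned to produce the $\frac{1}{6|u|^{3}}$ decay, uniqueness from the structure of the homogeneous solution space, and contour/branch bookkeeping for \eqref{branchposneg} --- is the natural one and is essentially sound in outline.

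However, Steps 1 and 2 rest on a false identification of the homogeneous solutions. The function $(1+u^2)^2$ does \emph{not} solve the homogeneous version of \eqref{poincare4}: a direct computation gives
\begin{equation*}
\frac{d}{du}\Bigl((1+u^2)\,\frac{d}{du}(1+u^2)^2\Bigr) \;-\; 12\,(1+u^2)^2 \;=\; 8\,(u^4-1)\;\neq\;0 .
\end{equation*}
The point you are missing is that $y^4$ and $y^{-3}$ solve $(\Delta_\Omega-12)g=0$ but are \emph{not} functions of $u=x/y$ alone, so they do not descend to solutions of the reduced ODE. That ODE has indicial roots $3$ and $-4$ at $u=\infty$ (from $(n+4)(n-3)=0$); its polynomial solution is the \emph{odd} cubic $u(3+5u^2)$, and the second, \emph{even} solution is precisely $\tfrac43+5u^2+u(3+5u^2)\arctan u$ --- the reduction-of-order partner of that cubic, not of $(1+u^2)^2$. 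Running reduction of order from $(1+u^2)^2$, as you propose, would produce garbage. The uniqueness argument should therefore read: the difference of two smooth even solutions with the stated decay is an even homogeneous solution; evenness kills the odd cubic, so the difference is a multiple of the $\arctan$ combination, which grows like $\tfrac{5\pi}{2}|u|^3$ and hence vanishes. With the correct basis your Step 1 cancellation does work out: the particular solution contributes $\tfrac{40}{3}|u|^3+8|u|+\tfrac16|u|^{-3}+O(|u|^{-5})$, while $-\tfrac{16}{3\pi}$ times the even homogeneous solution contributes $-\tfrac{40}{3}|u|^3-8|u|+O(|u|^{-4})$, leaving $\tfrac{1}{6|u|^{3}}$. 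Step 3 is the right computation, with exactly the sign delicacy you identify.
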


\noindent
With $h(u)$ as defined in (\ref{poincare3}), define
 \begin{equation}\label{poincare6}
  F(\Omega) \ \  = \ \  F(x+iy) \ \  = \ \  4\,\zeta(3)^2\,h(\smallf xy)
\end{equation}
so that  $(\Delta-12)F= -4\zeta(3)^2 {\mathcal T}$.
Appealing to the expression (\ref{E32squared}),   the general solution to~\eqref{laplaceeigen} among automorphic functions having polynomial growth  has the form
\begin{equation}\label{poincare7old}
f(\Omega)   \ \ =  \ \ \smallf{2\,\zeta(3)^2}{3}\,E_3(\Omega)  \  + \  \sum_{\g\,\in\,\mathcal S}
  (\det \g)^{-3} F(\g \Omega)  \ + \ \a\,E_4(\Omega)\,, \ \ \a\,\in\,\C\,.
\end{equation}
 Note that $h$ is bounded by a constant multiple of $\sigma$, and so $F$ is bounded by a constant multiple of $\mathcal T$.
 Thus the absolute convergence of the sum (\ref{poincare7old}) follows from that of (\ref{E32squared}).  This argument  furthermore   shows that the $\g$-sum is $O(y^3)$ as $y\rightarrow \infty$;
since $E_3(\Omega)$ satisfies the same bound, this implies the coefficient
\begin{equation}\label{alphavanishes}
    \alpha \ \ = \ \ 0
\end{equation}
in order for (\ref{largebound}) to hold.  Thus (\ref{poincare7old}) simplifies to
\begin{equation}\label{poincare7new}
f(\Omega)   \ \ =  \ \ \smallf{2\,\zeta(3)^2}{3}\,E_3(\Omega)  \  + \  \sum_{\g\,\in\,\mathcal S}
  (\det \g)^{-3} F(\g \Omega)
\end{equation}
as a result.

Consider the function
\begin{equation}\label{asympsofphi1}
      \phi\(\ttwo{m_1}{n_1}{m_2}{n_2},\Omega\) \ \ := \ \ |m_1n_2-n_1m_2|^{-3}F(\smallf{m_1\Omega+n_1}{m_2\Omega+n_2})\,.
\end{equation}
For  $\Omega$  fixed, it
has the well-defined limit  ${2\zeta(3)^2\over3}\, \f{ y^3}{|m_2\Omega+n_2|^6}$ along the singular set $m_1n_2-n_1m_2=0$, which we shall take as its defining value there.  Thus  (\ref{poincare7new}) can be rewritten as
\begin{equation}\label{asympsofphi2}
f(\Omega) \ \ = \ \ \sum_{(m_1,n_1),(m_2,n_2)\, \in  \,  (\Z\times \Z)'/\pm} \phi\(\ttwo{m_1}{n_1}{m_2}{n_2},\Omega\) ,
\end{equation}
where
 \begin{equation}\label{shadow3}
   (\Z\times \Z)'/\pm     \ \ = \ \    \{(0,1)\}\sqcup \{(c,d)\,|\,c>0, \gcd(c,d)=1\}\,.
\end{equation}

We shall later derive expressions for the Fourier modes of (\ref{poincare7new}) in terms of the Fourier transform of $h$,
\begin{equation}\label{lem:fhat}
    \aligned
    \widehat{h}(r) \ \ & = \ \ \int_\IR h(u)\,e^{-2\pi i ru}\,du \\
    & = \ \ 2\,\left({10\over\pi ^2 r^2}+1\right)  K_0(2 \pi|r|)
      \ + \
 {4 \left(3 \pi ^2 r^2+5\right)\over \pi^3|r|^3} K_1(2 \pi  |r|) \\ & \qquad\qquad  - \
 {32\over 3\pi \sqrt{|r|}}K_{7\over2}(2\pi |r|) \,, \ r\,\neq\,0\,,\\
 \widehat{h}(0) \ \ & = \ \ \f16\, .
    \endaligned
\end{equation}
This computation may be performed in a variety of ways:~shifting the contour of integration and wrapping around the branch cuts, using properties (\ref{branchposneg}); explicitly computing the Fourier transform  $h(u)(1+u^2)^{-s}$ for $\Re{s}$ large, and then analytically continuing to $s=0$; or taking the Fourier transform of the differential equation (\ref{poincare4}), and explicitly solving the resulting differential equation subject to the  constraints that $\hat{h}$ is continuous and has rapid decay.
\subsection{Fourier coefficients via a term-by-term analysis}
\label{sec:fouriermeth}

Returning to the expression (\ref{asympsofphi2}), we break the sum into pieces defined by
\begin{equation}\label{shadow4}
 f(\Omega) \ \  = \ \   \Sigma^{0,0}(\Omega) \  + \  \Sigma^{0,1} (\Omega)   \ + \   \Sigma^{1,0}(\Omega) \  + \   \Sigma^{1,1}(\Omega) \,,
\end{equation}
where
\begin{equation}\label{Sigma00def}
  \Sigma^{0,0} (\Omega) \ \ := \ \ \phi\(\ttwo{0}{1}{0}{1},\Omega\)  \ \  = \ \  {2\,\zeta(3)^2\over3}\,  y^3\,,
\end{equation}
\begin{equation}\label{Sigma01def}
  \Sigma^{0,1}(\Omega)  \ \ := \ \ \sum_{m_2\,=\,1}^\infty \sum_{\gcd(n_2,m_2)\,=\,1} \phi\(\ttwo{0}{1}{m_2}{n_2},\Omega\),
\end{equation}
\begin{equation}\label{Sigma10def}
  \Sigma^{1,0}(\Omega)  \ \ := \ \ \sum_{m_1\,=\,1}^\infty \sum_{\gcd(n_1,m_1)\,=\,1} \phi\(\ttwo{m_1}{n_1}{0}{1},\Omega\),
\end{equation}
and
\begin{equation}\label{Sigma11def}
  \Sigma^{1,1} (\Omega) \ \ := \ \ \sum_{m_1\,=\,1}^\infty \sum_{\gcd(n_1,m_1)\,=\,1}\sum_{m_2\,=\,1}^\infty \sum_{\gcd(n_2,m_2)\,=\,1}  \phi\(\ttwo{m_1}{n_1}{m_2}{n_2},\Omega\).
\end{equation}
The contribution of these terms to the Fourier modes, $\widehat{f}_n(y)$,   will be described in the rest of this appendix.  In sections~\ref{sec:Sigma01} and \ref{sec:Sigma11} we will   derive Fourier expansions
\be
\Sigma^{0,1} (\Omega) \ \  = \ \  \Sigma^{1,0}(\Omega) \ \  = \ \    \sum_{n\,\in\,\Z} e^{2 \pi i n x}\, \widehat{\Sigma}^{0,1}_n(y)\,,
\label{sigma01modes}
\ee
and
 \be
\Sigma^{1,1} (\Omega)   \ \ =  \ \  \sum_{  n_1\,\in\,\Z}\,\sum_{  n_2\,\in\,\Z} \, e^{2 \pi i( n_1 + n_2) x}\, \widehat{\Sigma}^{1,1}_{  n_1,  n_2}(y)\,,
\label{sigma11modes}
\ee
respectively.  The Fourier modes $\widehat{\Sigma}^{0,1}_n(y)$ and $\widehat{\Sigma}^{1,1}_n(y)$  are related to the $\widehat{f}_{n_1,n_2}$ of section~\ref{sec:boundaryterms} by the formulas
\begin{equation}\label{fhatsigmahatrelations}
\aligned
    \widehat{f}_{0,0}(y) \ \ & = \ \   \Sigma^{0,0} (\Omega)  \ + \ 2\,\widehat{\Sigma}^{0,1}_0(y) \ + \ \widehat{\Sigma}^{1,1}_{0,0}(y)\,, \\
    \widehat{f}_{n,0}(y) \ \ & = \ \ \widehat{f}_{0,n}(y) \ \ = \ \  \widehat{\Sigma}^{0,1}_n(y) \ + \ \widehat{\Sigma}^{1,1}_{n,0}(y)\,,\ \ n\,\neq\,0\,,\\
   \text{and~} \ \  \widehat{f}_{n_1,n_2}(y) \ \ & = \ \ \widehat{\Sigma}^{1,1}_{n_1,n_2}(y)\,,\ \ n_1,n_2\,\neq\,0\,.
\endaligned
\end{equation}
Writing \eqref{f00} as $ \widehat{f}_{0,0}(y)=  \widehat{f}^{(1)}_{0,0}(y)+ \widehat{f}^{(2)}_{0,0}(y)+ \widehat{f}^{(3)}_{0,0}(y)$, where $ \widehat{f}^{(r)}_{0,0}(y)$ is proportional to $y^{5-2r}$, we see that
\be
 \widehat{f}^{(1)}_{0,0}(y)   \ \ =  \ \ \Sigma^{0,0} (\Omega) \ \  = \ \  {2\,\zeta(3)^2\over3}\,y^3   \,.
\label{sigma01modes}
\ee

\subsubsection{Poisson summation for $\Sigma^{0,1}$ and $\Sigma^{1,0}$}
\label{sec:Sigma01}

We will now apply Poisson summation to ~\eqref{Sigma01def}  to put it  in the form (\ref{sigma01modes}).  First, reindexing the sum shows
\begin{equation}\label{Sigma01a}
\aligned
  &\Sigma^{0,1}(\Omega)  \ \  = \ \    \sum_{m_2\,=\,1}^\infty \sum_{n_2\,\in\,(\Z/m_2\Z)^*}\,\sum_{r\,\in\,\Z}
\phi\(\ttwo{0}{1}{m_2}{n_2+rm_2},\Omega\) \\
& = \ \
\sum_{m_2\,=\,1}^\infty \sum_{n_2\,\in\,(\Z/m_2\Z)^*}\sum_{r\in\Z}\, {1\over
  m_2^3}\,F((m_2x+n_2+rm_2+im_2 y)^{-1})\\
& = \ \
\sum_{m_2\,=\,1}^\infty \sum_{n_2\,\in\,(\Z/m_2\Z)^*}\,\sum_{r\in\Z} \,{4\,\zeta(3)^2\over
  m_2^3}\,h\left(m_2x+n_2+rm_2\over m_2 y\right).
\endaligned
\end{equation}
 The sum over $n_2$ ranges over all residue classes of integers modulo $m_2$ that are coprime to $m_2$.
Applying  Poisson summation to the inner sum,
\begin{eqnarray}
  \sum_{r\in\Z}   h\left(m_2x+n_2+rm_2\over m_2 y\right)&=&
  \sum_{n\in\Z}\int_\IR e^{-2\pi i n r} \, h\left(m_2x+n_2+rm_2\over m_2
    y\right)  \, dr\cr
&=& \sum_{n\in\Z}e^{2 \pi  i n {m_2x+n_2\over m_2} }  y\,\widehat{h}(ny)\,,
\end{eqnarray}
we obtain
\begin{equation}
  \label{eq:6}
\Sigma^{0,1}(\Omega) \ \ = \ \   \sum_{m_2\,=\,1}^\infty  {4\,\zeta(3)^2\over m_2^3}\,\sum_{n_2\,\in\,(\Z/m_2\Z)^*}\,\sum_{n\,\in\,\Z}
 e^{2\pi i n (x+{n_2\over m_2})}  \, y \, \widehat{h}(n y)\,.
\end{equation}
 The $n_2$-sum can be written in terms of the {\em  Ramanujan sum }
  \be
 \label{deframan}
  c_{m_2}(n) \ \ = \ \  \sum_{n_2=1\atop (m_2,n_2)=1}^{m_2} e^{2 \pi i \tfrac{n_2}{m_2} n} \,,
  \ee
  which satisfies the identity
\be
\sum_{m_2=1}^\infty \frac{c_{m_2}(n)}{m_2^{r}}  \ \ = \ \   \frac{\sigma_{1-r}(|n|)}{ \zeta(r)}\,.
\label{ramanid}
\ee
 When applied in the  $r=3$ case to \eqref{eq:6},  this results in the expression
\begin{equation}\label{Sigma01final}
  \Sigma^{0,1} (\Omega) \ \  = \ \    \Sigma^{1,0}(\Omega) \ \    = \ \  4\,\zeta(3) \sum_{n\,\in\,\Z} e^{2\pi i nx} \,\sigma_{-2}(|n|)\,  y\,\widehat{h}(n y)\,,
\end{equation}
which has the form~\eqref{sigma01modes}.
Formula \eqref{ramanid} also applies when $n=0$ provided we use the convention that $\sigma_{-2}(0) = \zeta(2)$.
  Since $\widehat{h}(0)=1/6$ by (\ref{lem:fhat}), it follows that the $n=0$ term in~\eqref{Sigma01final}   gives the total contribution
\be
\widehat{\Sigma}^{0,1}_0(y) \ + \ \widehat{\Sigma}^{1,0}_0(y) \ \ = \ \  \smallf{4}{3}\, \zeta(2)\,\zeta(3)\, y
\ \  =  \ \ \widehat{f}^{(2)}_{0,0}
\label{zerooneterm}
\ee
to $\widehat{f}_{0,0}(y)$,
which is the second term on the right hand side of~\eqref{f00}.
Furthermore,
\begin{equation}\label{fhatmanip}
\widehat{\Sigma}^{0,1}_n(y) \ \ = \ \  4\,\zeta(3)\, \sigma_{-2}(|n|) \, y\,\widehat{h}(n y) \ \ \ \  \text{for} \ \ n\,\neq\,0\,.
\end{equation}

\subsubsection{Poisson summation for $\Sigma^{1,1}$}\label{sec:Sigma11}

In order to produce the Fourier series~\eqref{sigma11modes},
 we perform a double Poisson summation on  the definition of $\Sigma_{1,1}$ in
 \eqref{Sigma11def} to obtain the formula
\begin{multline}\label{Sigma11a}
   \Sigma^{1,1}(\Omega)  \\    = \ \
\sum_{m_1\,=\,1}^\infty \sum_{n_1\,\in\,(\Z/m_1\Z)^*}\,\sum_{r_1\,\in\,\Z}
\sum_{m_2\,=\,1}^\infty \sum_{n_2\,\in\,(\Z/m_2\Z)^*}\,\sum_{r_2\,\in\,\Z}
 \phi\(\ttwo{m_1}{n_1+r_1m_1}{m_2}{n_2+r_2m_2},\Omega\)
\\  =
\ \
\sum_{m_1,m_2\,=\,1}^\infty \sum_{\srel{n_1\,\in\,(\Z/m_1\Z)^*}{\srel{n_2\,\in\,(\Z/m_2\Z)^*}{\hat{n}_1,\hat{n}_2\,\in\,\Z}}}\,
\int_{\IR^2}\phi\(\ttwo{m_1}{n_1+r_1m_1}{m_2}{n_2+r_2m_2},\Omega\) e^{-2\pi i \left(\hat{n}_1r_1+\hat{n}_2 r_2\right)}\,dr_1\,dr_2\,.
\end{multline}
The integral is given by
\begin{equation}\label{Sigma11b}
  \int_{\IR^2} F(\smallf{m_1(x+i  y)+(n_1+r_1m_1)}{m_2(x+i  y)+(n_2+r_2m_2)})\,
  \f{e^{-2\pi i \left(\hat{n}_1r_1+\hat{n}_2 r_2\right)}}{ |m_1(n_2+r_2m_2)-m_2(n_1+r_1m_1)|^{3}}\,dr_1\,dr_2\,.
\end{equation}
With the change of variables $r_1\mapsto r_1-x-\f{n_1}{m_1}$, $r_2\mapsto r_2-x-\f{n_2}{m_2}$ this integral becomes
\begin{multline}\label{Sigma11c}
  e^{2\pi i\left((\hat{n}_1+\hat{n}_2)x+\hat{n}_1\smallf{n_1}{m_1}+\hat{n}_2\smallf{n_2}{m_2}\right)} \\ \times
\int_{\IR^2} |m_1m_2(r_2-r_1)|^{-3}\,F(\smallf{m_1(r_1+i y)}{m_2(r_2+i y)})\,e^{-2\pi i \left(\hat{n}_1r_1+\hat{n}_2 r_2\right)}\,dr_1\,dr_2 \\ = \ \
 e^{2\pi i\left((\hat{n}_1+\hat{n}_2)x+\hat{n}_1\smallf{n_1}{m_1}+\hat{n}_2\smallf{n_2}{m_2}\right)}
  \,  y^{-1} \, |m_1m_2|^{-3}\\
 \times
\int_{\IR^2} |r_2-r_1|^{-3}\,F(\smallf{r_1+i}{r_2+i})\,e^{-2\pi i\left( \hat{n}_1 yr_1+\hat{n}_2 y r_2\right)}\,dr_1\,dr_2\,,
\end{multline}
after changing variables $r_1\mapsto  yr_1$ and $r_2\mapsto  yr_2$.
 After applying (\ref{ramanid}) twice,  we can write (\ref{Sigma11a}) in the form~\eqref{sigma11modes} with mode coefficients given by
\be
\widehat{\Sigma}^{1,1}_{\hat n_1,\hat n_2}(y) \  \ = \ \
  \f{4}{y} \, \sigma_{-2}(|\hat{n}_1|)\,\sigma_{-2}(|\hat{n}_2|)\,   {\mathcal I}(\hat{n}_1,\hat{n}_2; y)\,\,
\label{modesrep}
\ee
where
\be
 {\mathcal I}(\hat{n}_1,\hat{n}_2; y) \ \ = \ \ {1\over4\,\zeta(3)^2}\int_{\IR^2} \f{F(\smallf{r_1+i}{r_2+i})}{|r_2-r_1|^3}\,e^{-2\pi i\left(\hat{n}_1 yr_1+\hat{n}_2 y r_2 \right)}\,dr_1\,dr_2\,.
\label{intdefs}
\ee
Using
\begin{equation}\label{Sigma11e}
 F(\smallf{r_1+i}{r_2+i})
 \ \ = \ \   4\,\zeta(3)^2\,h(\smallf{r_1r_2+1}{r_2-r_1})\,,
\end{equation}
this can be rewritten as
\begin{equation}\label{Sigma11f}
 {\mathcal I}(\hat{n}_1,\hat{n}_2; y) \ \ := \ \  \int_{\IR^2}
 \f{h(\smallf{r_1r_2+1}{r_2-r_1})}{|r_2-r_1|^3}\ e^{-2\pi i
   \left(\hat{n}_1 yr_1+\hat{n}_2 y r_2\right)}\,dr_1\,dr_2\,.
\end{equation}
\begin{lem}\label{lem:integrability}
The integral (\ref{Sigma11f}) is absolutely convergent.  Consequently, $\mathcal{I}(\hat{n}_1,\hat{n}_2; y)$ is bounded in $ y$  and the Fourier modes (\ref{modesrep}) are $O(y^{-1})$.
\end{lem}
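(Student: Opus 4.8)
The plan is to dominate the integrand of~(\ref{Sigma11f}) in absolute value by a function on $\R^2$ that is integrable and \emph{independent of $y$}. Since the exponential factor in~(\ref{Sigma11f}) has modulus one, such a bound yields simultaneously the absolute convergence of~(\ref{Sigma11f}) and an estimate on $|\mathcal{I}(\hat n_1,\hat n_2;y)|$ uniform in $y$; the assertion that~(\ref{modesrep}) is $O(y^{-1})$ then follows at once, as~(\ref{modesrep}) equals $\tfrac{4}{y}\,\sigma_{-2}(|\hat n_1|)\,\sigma_{-2}(|\hat n_2|)$ times $\mathcal{I}(\hat n_1,\hat n_2;y)$.

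First I would record the pointwise estimate $|h(u)|\le C\,(1+u^2)^{-3/2}$ for all $u\in\R$, with $C>0$ a constant. This is immediate from Lemma~\ref{inhomtosigma}: $h$ is continuous on all of $\R$ and $h(u)\sim\tfrac{1}{6|u|^3}$ as $u\to\pm\infty$, so $h(u)/\sigma(u)$ with $\sigma(u)=(u^2+1)^{-3/2}$ extends to a bounded continuous function (this is the bound recorded in the discussion following~(\ref{poincare7old})). The second ingredient is the elementary identity
\[
(r_2-r_1)^2+(r_1r_2+1)^2 \ \ = \ \ (r_1^2+1)(r_2^2+1)\,,
\]
which is the computation behind~(\ref{poincare7b}) (equivalently, it reflects $\tfrac{r_1+i}{r_2+i}=\tfrac{r_1r_2+1+i(r_2-r_1)}{r_2^2+1}$, cf.~(\ref{Sigma11e})). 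Dividing by $(r_2-r_1)^2$ gives
\[
1+\Bigl(\frac{r_1r_2+1}{r_2-r_1}\Bigr)^{2} \ \ = \ \ \frac{(r_1^2+1)(r_2^2+1)}{(r_2-r_1)^2}\,,
\]
so combining this with the bound on $h$ produces the key cancellation of the $|r_2-r_1|^{-3}$ factor:
\[
\frac{\bigl|h\bigl(\tfrac{r_1r_2+1}{r_2-r_1}\bigr)\bigr|}{|r_2-r_1|^3}
\ \ \le \ \
\frac{C\,|r_2-r_1|^3}{(r_1^2+1)^{3/2}\,(r_2^2+1)^{3/2}\,|r_2-r_1|^3}
\ \ = \ \
\frac{C}{(r_1^2+1)^{3/2}\,(r_2^2+1)^{3/2}}\,.
\]
Since $\int_\R(1+r^2)^{-3/2}\,dr<\infty$, the right-hand side is integrable on $\R^2$; this proves absolute convergence of~(\ref{Sigma11f}), with $|\mathcal{I}(\hat n_1,\hat n_2;y)|\le C\bigl(\int_\R(1+r^2)^{-3/2}\,dr\bigr)^2$ for every $y$, and hence that~(\ref{modesrep}) is $O(y^{-1})$.

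The one place that looks delicate — and the point I would flag — is the apparent non-integrable singularity of $|r_2-r_1|^{-3}$ along the diagonal $r_1=r_2$. This is illusory: on that locus the argument $\tfrac{r_1r_2+1}{r_2-r_1}$ of $h$ runs off to $\pm\infty$, and the cubic decay of $h$ at infinity exactly offsets the cubic blow-up, as the displayed cancellation makes manifest. (Note also that $r_1=r_2$ real forces $r_1r_2+1=r_1^2+1>0$, so no singularity of $\sigma$ is ever encountered.) Beyond this observation the estimate is routine, so I do not anticipate any further obstacle.
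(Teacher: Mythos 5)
Your proof is correct and follows essentially the same route as the paper's: bound $|h(u)|$ by a constant multiple of $(1+u^2)^{-3/2}$ and use the identity $(r_2-r_1)^2+(r_1r_2+1)^2=(r_1^2+1)(r_2^2+1)$ to cancel the $|r_2-r_1|^{-3}$ factor, reducing to a convergent product integral. Your version merely makes the algebraic cancellation explicit where the paper states the resulting equality of integrals directly.
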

\begin{proof}
 Since the exponential has modulus 1 and $h\ge 0$, it suffices to show the convergence of ${\mathcal I}(0,0;y)$.  Recalling that  $h(u)$ is bounded by a constant multiple of $\sigma(u)=(u^2+1)^{-3/2}$, this reduces to the convergence of the integral
 \begin{multline}\label{integrability1}
    \int_{\R^2}|r_2-r_1|^{-3}\,(1+(\smallf{r_1 r_2+1}{r_2-r_1})^2)^{-3/2}\,dr_1\,dr_2 \ \ =
     \\ \int_{\R^2}(1+r_1^2)^{-3/2}\,(1+r_2^2)^{-3/2}\,dr_1\,dr_2 \,,
 \end{multline}
 which is of course finite.
\end{proof}
\noindent
In fact, we know from \eqref{modesrep} and \eqref{f00} that
\be
\label{zeromodess}
\widehat{\Sigma}^{1,1}_{0,0}(y) \ \ = \ \   {4\over y} \,\zeta(2)^2 \,  {\mathcal I}(0,0; y) \  \ = \ \  \widehat f^{(3)}_{0,0}(y) \ \ = \  \  \f{4\,\zeta(4)}{y}\,,
\ee
so that ${\mathcal I}(0,0; y)=2/5$  (this can also be verified  by   numerical integration).

  Equations (\ref{zerooneterm})-(\ref{fhatmanip}) show that $\widehat{\Sigma}_n^{0,1}(y)=O(y)$ for small $y$, and so (\ref{fhatsigmahatrelations}) and  lemma~\ref{lem:integrability} imply  that the modes $\widehat{f}_{n_1,n_2}$ are at most $O(1/y)$ for small
values of $y$.
Nevertheless, the $y\to 0$ limit of this expression is more singular than $1/y$ because the $n$-th Fourier coefficient $\widehat{f}_{n}(y)$ includes    the sum
\be \sum_{n_1\,\neq\,0,n}
   \widehat{f}_{n_1,n-n_1}(y) \ \ = \ \   \f{4}{y} \, \sum_{n_1\,\neq\,0,n}\,\sigma_{-2}(|n_1|)\,
     \sigma_{-2}(|n-n_1|)\, {\mathcal I}(n_1,n-n_1; y)\,.
\label{partialsum}
\ee
Indeed, in   (\ref{nonzerolimit})  the $y\to 0$ behaviour of this expression  was shown to be proportional to a constant multiple of $1/y^2$.

\section{ Poincar\'e series and  Eisenstein automorphic distributions}
\label{sec:schmidmethod}
We now return to the sum over
  $\gamma$ in~\eqref{poincare7new}\,,
\begin{equation}\label{poincare8}
\sum_{\g\,\in\,\mathcal S}
  (\det \g)^{-3} F(\g \Omega)\,,
\end{equation}
where we recall that ${\mathcal S}=\{\pm 1\}\backslash \{\ttwo{m_1}{n_1}{m_2}{n_2}\in M_2(\Z)\cap GL^+(2,\IR)|\gcd(m_1,n_1)=\gcd(m_2,n_2)=1\}$.
We begin with some comments about the structure of the set $\mathcal S$.
Suppose $[m\,n]\in \Z^2$ is a vector with $d=\gcd(m,n)$. Then $d$ divides $[m\,n]\g$ for any integral $2\times 2$ matrix $\g$.  Consequently, if $\g\in SL(2,\Z)$  then $\gcd([m\,n])=\gcd([m\,n]\g)$.  If $\g$ has the form $\g=\ttwo{n_2}{b}{-m_2}{d}\in SL(2,\Z)$ and $\ttwo{m_1}{n_1}{m_2}{n_2}\in\mathcal S$, then $\ttwo{m_1}{n_1}{m_2}{n_2}\g = \ttwo{p_1}{p_2}{0}{1}$ for some relatively prime integers $p_1,p_2\in \Z$, where $p_1=m_1n_2-n_1m_2>0$.    Thus
\begin{equation}\label{poincare9}
 { \mathcal S}    \ \ = \ \   \left\{
 \ttwo{p_1}{p_2}{0}{1}\g \,|\, p_1>0, p_2\in (\Z/p_1\Z)^*, \g \in \G
 \right\}
\end{equation}
parameterizes elements of $\mathcal S$ via $\G=PSL(2,\Z)$.

In light of (\ref{poincare9}),
(\ref{poincare8}) becomes
\begin{equation}\label{poincare10a}
\sum_{\g\,\in\,\mathcal S}
  (\det \g)^{-3} F(\g \Omega)  \ \ = \ \  \sum_{p_1\,=\,1}^\infty\sum_{p_2\,\in\,(\Z/p_1\Z)^*} \sum_{\g\,\in\,PSL(2,\Z)}p_1^{-3}\,F\(\ttwo{p_1}{p_2}{0}{1}\g \Omega\).
\end{equation}
By virtue of its  definition in (\ref{poincare6}), $F\(\ttwo{p_1}{0}{0}{1}\Omega\)=F(p_1\Omega)=F(\Omega)$ and so  (\ref{poincare10a}) can be written as
\begin{multline}\label{poincare10b}
  \sum_{p_1\,=\,1}^\infty\sum_{p_2\,\in\,(\Z/p_1\Z)^*} \sum_{\g\,\in\,PSL(2,\Z)}p_1^{-3}\,F\(\ttwo{1}{p_2/p_1}{0}{1}\g \Omega\) \\
  = \ \
 \sum_{\f{p_2}{p_1}\,\in\,\Q} p_1^{-3} \sum_{\g\,\in\,\G_\infty\backslash \G} F(\smallf{p_2}{p_1}+\g\Omega) =  \sum_{\g\,\in\,\G_\infty\backslash \G} \Phi(\g\Omega)\,,
\end{multline}
where
\begin{equation}\label{poincare12a}
  \Phi(\Omega)  \  \ = \ \    \sum_{\f{p_2}{p_1}\,\in\,\Q} p_1^{-3}F(\smallf{p_2}{p_1}+\Omega)\,.
\end{equation}
Thus (\ref{poincare10b}) writes the sum (\ref{poincare8}) as a sum of left translates of $\Phi$ over $\G_\infty\backslash \G$, the type of the sum that the terminology ``Poincar\'e series'' is traditionally reserved for.

A standard double coset decomposition for $\G_\infty\backslash \G/\G_\infty$ and application of Poisson summation (see~\cite{sarnak,iwaniec}) to the last expression in (\ref{poincare10b}) gives
\begin{multline}\label{poincare14}
  \sum_{\g\,\in\,\mathcal S}
  (\det \g)^{-3} F(\g \Omega) \ \ = \ \  \Phi(\Omega) \ + \\ \sum_{c\,=\,1}^\infty \sum_{d\,\in\,(\Z/c\Z)^*}\,\sum_{n\,\in\,\Z}e^{2\pi in(x+d/c)}
   \int_{\IR}e^{-2\pi i nr}\,
   \Phi(\smallf ac - \smallf{1}{c^2(r+i y)})\,dr\,.
\end{multline}
To compute this integral, we use the following Fourier expansion of $\Phi$:

\begin{lem}\label{schmid}  In terms of  the function $h$ from (\ref{poincare6}),
\begin{equation}\label{schmidformula}
\Phi(x+i y)  \ \  = \ \   4\,\zeta(3)\,\sum_{n\,\in\,\Z} \sigma_{-2}(|n|) \, e^{2\pi i nx} \, y\,\widehat{h}(n  y)\,,
\end{equation}
where  $\hat{h}(\cdot)$ was computed in (\ref{lem:fhat}) and $\sigma_{-2}(|n|)=\sum_{d|n}d^{-2}$ is to be interpreted as $\zeta(2)$ when $n=0$.
\end{lem}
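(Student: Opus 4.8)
The plan is to obtain the Fourier expansion of $\Phi$ directly by Poisson summation in the real variable, mirroring the computation of $\widehat{\Sigma}^{0,1}_n(y)$ carried out in section~\ref{sec:Sigma01}. First I would unwind the definition \eqref{poincare12a}: since $F(x'+iy')=4\,\zeta(3)^2\,h(x'/y')$ by \eqref{poincare6} and translating $\Omega$ by a rational number affects only its real part,
\[
  \Phi(x+iy) \ = \ 4\,\zeta(3)^2\, \sum_{p_1\,=\,1}^\infty \frac{1}{p_1^{3}} \sum_{\srel{p_2\,\in\,\Z}{\gcd(p_1,p_2)=1}} h\!\left(\frac{x+p_2/p_1}{y}\right).
\]
Grouping the inner sum by residue classes $n_2\in(\Z/p_1\Z)^*$ and writing $p_2=n_2+r\,p_1$ with $r\in\Z$, and using that $h$ is even, one sees that this is literally the expression \eqref{Sigma01a} for $\Sigma^{0,1}(x+iy)$; thus $\Phi=\Sigma^{0,1}$ and the lemma is a restatement of \eqref{Sigma01final}. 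I would nonetheless reproduce the short derivation for completeness.

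Concretely, the next step is to apply Poisson summation to the $r$-sum $\sum_{r\,\in\,\Z} h\bigl(\tfrac{x+n_2/p_1}{y}+\tfrac{r}{y}\bigr)$. The substitution $u=\tfrac{x+n_2/p_1}{y}+\tfrac{r}{y}$ turns its $n$-th Fourier coefficient into $y\, e^{2\pi i n(x+n_2/p_1)}\,\widehat{h}(ny)$, with $\widehat h$ the Fourier transform of \eqref{lem:fhat}; summing over $n_2\in(\Z/p_1\Z)^*$ produces the Ramanujan sum $c_{p_1}(n)$ of \eqref{deframan}, so that
\[
  \Phi(x+iy) \ = \ 4\,\zeta(3)^2\, \sum_{n\,\in\,\Z} e^{2\pi i n x}\, y\,\widehat{h}(ny)\, \sum_{p_1\,=\,1}^\infty \frac{c_{p_1}(n)}{p_1^{3}}.
\]
Evaluating the inner sum by the Ramanujan identity \eqref{ramanid} with $r=3$ gives $\sigma_{-2}(|n|)/\zeta(3)$ --- with the usual convention $\sigma_{-2}(0)=\zeta(2)$, since $c_{p_1}(0)=\varphi(p_1)$ and $\sum_{p_1}\varphi(p_1)p_1^{-3}=\zeta(2)/\zeta(3)$ --- and cancelling one power of $\zeta(3)$ yields \eqref{schmidformula}.

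The one point requiring care, and where I expect the (mild) obstacle to lie, is the bookkeeping of absolute convergence that legitimizes Poisson summation together with the two interchanges of summation. This is handled exactly as in Lemma~\ref{lem:integrability}: $h(u)$ is smooth and bounded by a constant multiple of $\sigma(u)=(1+u^2)^{-3/2}$, so the terms of the defining double sum with $|p_2|\gtrsim p_1$ contribute $O(\max(1,y^3)\,p_1)$, which is summable against $p_1^{-3}$ uniformly for $y$ in compacta; and $\widehat h$ is continuous with exponential decay (by \eqref{lem:fhat}), so Poisson summation and Fubini apply term by term. No deeper input is needed --- once the identity $\Phi=\Sigma^{0,1}$ is noted, the lemma follows from \eqref{Sigma01final} with no new computation.
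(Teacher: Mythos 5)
Your proposal is correct and follows essentially the same route as the paper's proof: write each rational as a residue class modulo $p_1$ plus an integer, Poisson-sum over the integer translates to produce $y\,\widehat{h}(ny)$ times the Ramanujan sum $c_{p_1}(n)$, and evaluate $\sum_{p_1}c_{p_1}(n)p_1^{-3}$ via (\ref{ramanid}) — this is exactly the chain of equalities in (\ref{schmidpf1}). Your preliminary observation that $\Phi$ coincides term-by-term with $\Sigma^{0,1}$ from (\ref{Sigma01a}) (using that $h$ is even), so that the lemma is a restatement of (\ref{Sigma01final}), is accurate and a nice economy, though the underlying computation is the same either way.
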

\begin{proof}
Writing the rationals in (\ref{poincare12a}) as an integer plus a rational in the interval $[0,1)$ we have that
\begin{equation}\label{schmidpf1}
\aligned
  \Phi(x+i y) \ \  & = \ \ \sum_{p_1\,=\,1}^\infty p_1^{-3} \, \sum_{p_2\,\in\,(\Z/p_1\Z)^*} \sum_{n\,\in\,\Z} F(\smallf{p_2}{p_1} + x+n+i y)
  \\
  & = \ \ \sum_{p_1\,=\,1}^\infty p_1^{-3} \, \sum_{p_2\,\in\,(\Z/p_1\Z)^*} \sum_{n\,\in\,\Z}
  \int_{\IR} F(\smallf{p_2}{p_1} + x+u+i y)\,e^{-2\pi i nu}\,du \\
  & = \ \ \sum_{p_1\,=\,1}^\infty p_1^{-3} \!\!\!  \sum_{p_2\,\in\,(\Z/p_1\Z)^*} \sum_{n\,\in\,\Z} e^{2\pi i n(x+\smallf{p_2}{p_1})}\int_{\IR} F(u+i y)\,e^{-2\pi i nu}\,du \\
  & = \ \ \sum_{n\,\in\,\Z}e^{2\pi i nx}  \( \sum_{p_1\,=\,1}^\infty p_1^{-3} \, \sum_{p_2\,\in\,(\Z/p_1\Z)^*}
  e^{2\pi i np_2/p_1}\) \\ & \qquad\qquad\qquad\qquad\qquad\qquad \times \ 4\,\zeta(3)^2\,\int_{\IR} h(\smallf{u}{ y})\,e^{-2\pi i nu}\,du
  \endaligned
\end{equation}
after applying Poisson summation and (\ref{poincare6}).  The lemma now follows from (\ref{ramanid}).
\end{proof}
After inserting (\ref{schmidformula}), the integral in  (\ref{poincare14}) becomes
\begin{equation}\label{poincare15new}
   4\,\zeta(3) \int_{\IR}e^{-2\pi i nr}\,\sum_{m\,\in\,\Z} \sigma_{-2}(|m|) \,e^{2\pi i \smallf{ma}{c}-2\pi i \smallf{mr}{c^2(r^2+y^2)}}\,
    \smallf{y}{c^2(r^2+y^2)}\,\hat{h}( \smallf{my}{c^2(r^2+y^2)})\,dr\,.
\end{equation}
In terms of the  Kloosterman sum $S(a,b;c):=\sum_{x\in (\Z/c\Z)^*}e((ax+bx^{-1})/c)$, the second term on the righthand side of  (\ref{poincare14}) equals
\begin{multline}\label{poincare16new}
   4\,\zeta(3)\, \sum_{c\,=\,1}^\infty\sum_{m,n\,\in\,\Z}e^{2\pi i nx}\,\sigma_{-2}(|m|)\,S(m,n;c)
    \\ \times \ \int_{\IR}e^{-2\pi i \(nr+\smallf{mr}{c^2(r^2+y^2)}\)}\smallf{y}{c^2(r^2+y^2)}\,\hat{h}( \smallf{my}{c^2(r^2+y^2)})\,dr\,.
\end{multline}
The Fourier coefficient of the Fourier mode $x\mapsto e(nx)$ of (\ref{poincare8}) can thus be read off from this and (\ref{poincare14}); combining with (\ref{poincare7new})  proves the following
\begin{lem}\label{kloostermanfourierexpansion}
The Fourier modes $\widehat{f}_n(y)$ from (\ref{fouriermodes}) are given as the sums
\begin{multline}\label{kloosterfourier}
    \widehat{f}_n(y) \ \ = \ \ \smallf{2\,\zeta(3)^2}{3}{\mathcal F}_{n,3}(y) \ + \ 4\,\zeta(3)\,\sigma_{-2}(|n|)\,y\,\widehat{h}(ny) \ + \\ 4\zeta(3)\sum_{\srel{c\,>\,0}{m\,\in\,\Z}}\sigma_{-2}(|m|)S(m,n;c)
    \int_{\IR}e^{-2\pi i r \(n+\smallf{m}{c^2(r^2+y^2)}\)}\smallf{y}{c^2(r^2+y^2)}\,\hat{h}\( \smallf{my}{c^2(r^2+y^2)}\)dr,
\end{multline}
where ${\mathcal F}_{n,s}(y)$ is defined in (\ref{eisenzero})-(\ref{nonzeroeisen}).
\end{lem}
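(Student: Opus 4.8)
The plan is to obtain \eqref{kloosterfourier} by taking Fourier coefficients term by term in the absolutely convergent identity \eqref{poincare7new}, namely $f(\Omega)=\frac{2\,\zeta(3)^2}{3}\,E_3(\Omega)+\sum_{\g\in\mathcal S}(\det\g)^{-3}F(\g\Omega)$. The Eisenstein summand contributes precisely $\frac{2\,\zeta(3)^2}{3}\,\mathcal F_{n,3}(y)$ by the classical Fourier expansion \eqref{eisenfourier}--\eqref{nonzeroeisen}, so the whole task reduces to computing the $n$-th Fourier coefficient of the $\g$-sum in \eqref{poincare8}.

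For the $\g$-sum I would first use the parameterization \eqref{poincare9} of $\mathcal S$, which writes each element uniquely as $\ttwo{p_1}{p_2}{0}{1}\g$ with $p_1\ge 1$, $p_2\in(\Z/p_1\Z)^*$, and $\g\in PSL(2,\Z)$. Because $F$ depends only on the ratio of the real and imaginary parts of its argument, $F(p_1\Omega)=F(\Omega)$, and this collapses the $p_1$- and $p_2$-sums together with the cosets into the genuine Poincar\'e series $\sum_{\g\in\G_\infty\backslash\G}\Phi(\g\Omega)$ with seed function $\Phi$ as in \eqref{poincare12a}. The Fourier expansion of $\Phi$ is computed in lemma~\ref{schmid} by Poisson summation in the integer translates together with the Ramanujan identity \eqref{ramanid}, giving $\Phi(x+iy)=4\,\zeta(3)\sum_{n}\sigma_{-2}(|n|)\,e^{2\pi inx}\,y\,\widehat h(ny)$ with $\widehat h$ as in \eqref{lem:fhat}.

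It then remains to unfold the Poincar\'e series in the usual way. The identity double coset reproduces $\Phi(\Omega)$ itself, hence the middle term $4\,\zeta(3)\,\sigma_{-2}(|n|)\,y\,\widehat h(ny)$ of \eqref{kloosterfourier}. The remaining cosets are grouped by the standard double coset decomposition of $\G_\infty\backslash\G/\G_\infty$ indexed by $c\ge 1$ and $d\in(\Z/c\Z)^*$, for which $\g\Omega=\frac ac-\frac1{c^2(r+iy)}$ after shifting $x$; carrying out the $x$-integral (equivalently, a Poisson summation in the translation variable) and inserting the Fourier expansion of $\Phi$ from lemma~\ref{schmid} replaces the $d$-sum by the Kloosterman sum $S(m,n;c)$ and leaves exactly the one-dimensional integral transform of $\widehat h$ in the last line of \eqref{kloosterfourier}. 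Adding the three contributions gives the asserted formula.

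The step demanding the most care is the justification of these rearrangements, i.e.\ absolute convergence throughout. The key input is the pointwise bound $|h(u)|\le C\,\sigma(u)=C\,(1+u^2)^{-3/2}$ recorded after \eqref{poincare7old}: it shows $|F|$ is dominated by a constant multiple of $\mathcal T$, so the $\g$-sum inherits the absolute convergence of \eqref{E32squared}, which legitimizes passing to the Poincar\'e series and taking Fourier coefficients termwise. For the Kloosterman term one uses that $\widehat h(r)$ is a finite linear combination of $K$-Bessel functions, hence exponentially small as $|r|\to\infty$ with at worst a $|r|^{-3}$ singularity at $r=0$, together with the Weil bound $|S(m,n;c)|\ll c^{1/2+\e}$; these make the double sum over $c$ and $m$ and the accompanying integrals absolutely convergent, so all the interchanges above are valid and the lemma follows.
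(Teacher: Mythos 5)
Your proposal is correct and follows essentially the same route as the paper: rewriting the $\g$-sum in (\ref{poincare7new}) as the Poincar\'e series $\sum_{\g\in\G_\infty\backslash\G}\Phi(\g\Omega)$ via (\ref{poincare9}), computing the Fourier expansion of $\Phi$ as in lemma~\ref{schmid}, and unfolding through the double coset decomposition of $\G_\infty\backslash\G/\G_\infty$ with Poisson summation to produce the Kloosterman sums. Your added remarks on absolute convergence (the bound $|h|\le C\sigma$ and the decay of $\widehat h$) are consistent with, and slightly more explicit than, what the paper records.
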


\medspace

\noindent
{\bf Remark:}
The function $\Phi(x+iy)$ can be interpreted in terms of
Schmid's automorphic Eisenstein distribution  as
\begin{equation}\label{autodist1}
   \Phi(x+iy) \ \ = \ \  4\,\zeta(3)\,\int_{\R}\tau_{2}(u+x)\,h(\smallf uy)\,du\,,
\end{equation}
where
\begin{equation}\label{autodist2}
    \tau_\nu(u) \ \  =
    \ \ \sum_{\srel{p,q\,\in\,\Z}{q>0}}q^{-\nu-1}\,\d_{u=p/q}  \ \ = \ \ \sum_{n\,\in\,\Z}\sigma_{-\nu}(|n|)\,e^{2\pi i nu}
\end{equation}
is the restriction to $\R$ of the automorphic distribution corresponding to the Eisenstein series $E_{(\nu+1)/2}$ (see \cite[\S4]{korea}). It can be alternatively be thought of as a distributional limit
of values of $E_{(\nu+1)/2}(x+iy)$ as $y\to 0$ \cite{flato}.  When multiplied by a suitable power of $y$, integrals such as (\ref{autodist1}) represent the embedding of  vectors (playing the role of $h$) in the line model of a principal series representation of $SL(2,\R)$, into spaces of automorphic functions on the group $SL(2,\R)$.  Through this formalism, $\Phi(x+iy)+\f{2\zeta(3)^2}{3}y^3$ (whose periodization over $\g\in \G_\infty\backslash\G$ is $f(x+iy)$ in (\ref{poincare7new})) is itself naturally related to the restriction of  an automorphic function on $SL(2,\R)$ to its upper triangular subgroup $\{\ttwo 1x01 \ttwo{y^{1/2}}{0}{0}{y^{-1/2}}|x\in\R,y>0\}$.
Furthermore, identity (\ref{schmidformula}) is an immediate consequence of (\ref{autodist2}).

\section{Solution as expressed via spectral theory}
\label{sec:spectral}

  We shall now present the spectral expansion of the solution $f$ to (\ref{laplaceeigen}).  Though we shall not directly link it to the expression (\ref{kloosterfourier}), there is in fact a famous connection between Kloosterman sums and the spectral theory of automorphic forms (see, for example,~\cite{sarnak,iwaniec2}).

 A significant complication here is that the source term   $-(2\zeta(3)E_{\frac32})^2$ in (\ref{laplaceeigen})
  is not integrable over the automorphic quotient, hence a divergent contribution must be subtracted from it.
    However, it is striking that the corresponding source term {\em is}  integrable for $D_5$, $E_6$, $E_7$,
    and $E_8$~\cite{GMRV,GMV}.

Consider $L^2(\G\backslash \U)$, with the inner product
\be\label{innerproduct}
\langle f_1, f_2 \rangle \ \  = \ \  \int_{\G\backslash {\mathbb H}}f_1(x+iy) \, \overline{f_2(x+iy)}\,\f{dxdy}{y^2}\,.
\ee
The R\"olcke-Selberg spectral expansion theorem  states that any function $H\in L^2(\G\backslash \U)$ can be expanded as
\begin{equation}\label{sl2spectexp}
    H(z) \ \   =  \ \  \sum_{j\,=\,0}^\infty \langle H,\phi_j\rangle\,\phi_j(\Omega) \
+ \ \f{1}{4\pi i}\int_{\Re{s}=1/2}\langle H,E_s\rangle E_s(\Omega)\,ds\,,
\end{equation}
where the $\phi_j$ are an orthonormal basis for the discrete spectrum satisfying $(\Delta+\l_j)\phi_j=0$ for some eigenvalue $\l_j\ge 0$ ($\phi_0$ is the constant $\sqrt{\frac{3}{\pi}}$, whereas the $\phi_j$ for $j\ge 1$ are Maass cusp forms).

Consider the inhomogeneous differential equation
\begin{equation}\label{sl2laplaceA1}
    (\Delta-12)F  \ \  = \ \   -E_{3/2}^2\,, \ \ F\ \text{automorphic under}\ \G\,,
\end{equation}
which differs from (\ref{laplaceeigen}) by the constant multiple $4\zeta(3)^2$.  The polynomially-bounded solutions to its
  homogeneous analog
\begin{equation}\label{sl2laplaceA2}
    (\Delta-12)F   \ \ = \ \   0
\end{equation}
which are automorphic under $\G$ are all scalar multiples of the
 Eisenstein series $E_4$, which we recall grows like $y^4$ as $y\rightarrow \infty$.
Because of the growth condition (\ref{largebound}),
 the solution $f$ to (\ref{laplaceeigen}) is $4\zeta(3)^2$ times the unique
   $\G$-automorphic solution $F(x+iy)$ to (\ref{sl2laplaceA1}) which grows by at most $O(y^3)$ in the cusp.

For large values of $y$, the Eisenstein series $E_s(x+iy)$ is asymptotic to its constant term
 ${\mathcal F}_{0,s}(y)=y^s+c(2s-1)y^{1-s}$ given in (\ref{eisenzero}), where $c(s)=\f{\xi(s)}{\xi(s+1)}$ and $\xi(s)=\pi^{-s/2}\G(s/2)\zeta(s)$.
  Thus the righthand side of (\ref{sl2laplaceA1})  is asymptotic to
   $-y^3 - 2c(2)y - c(2)^2y^{-1}$ as $y\rightarrow \infty$.
    The automorphic function $\hat{E}_1 $, defined as the constant term in
    the Laurent expansion
\begin{equation}\label{E1Laurent}
  E_s   \ \ = \ \   \smallf{3}{\pi(s-1)} \ + \ \hat{E}_1  \ + \ (s-1)\,\hat{\hat{E}}_1  \ + \ O((s-1)^2)
\end{equation}
 of $E_s$ at $s=1$, is asymptotic to $y$ as $y\rightarrow \infty$.  Therefore
\begin{equation}\label{rhsmanip1}
  H \ \ := \ \   -E_{3/2}^2 \ + \ E_3 \ + \ 2\,c(2)\, \hat
    {E}_1 \ \   = \ \  O(\log(y))
\end{equation}
and is in particular in $L^2(\G\backslash \U).$  Applying the spectral expansion (\ref{sl2spectexp}), we see that
\begin{equation}\label{rhsmanip2}
    (\Delta-12)^{-1}H  =  -\, \sum_{j\,=\,0}^\infty  \f{\langle H,\phi_j\rangle}{\l_j+12}\,\phi_j(\Omega) \
 + \ \f{1}{4\pi i}\int_{\Re{s}=1/2}\f{\langle H,E_s\rangle}{s(s-1)-12} E_s(\Omega)\,ds\,.
\end{equation}
Since $H$ is square integrable and since the coefficients on the righthand side of (\ref{rhsmanip2}) are each smaller than their respective counterparts in (\ref{sl2spectexp}), Parseval's theorem shows that $(\Delta-12)^{-1}H$
 is also in $L^2(\Gamma\backslash \mathbb H)$.

 Applying $\Delta$ to both sides of (\ref{E1Laurent}) and comparing constant terms in $s$ at $s=1$ results in the differential equation $\Delta\hat{E}_1=\f{3}{\pi}$.  Since $\Delta E_3=6E_3$, we conclude
 \begin{multline}\label{theinhomogsolintermsofinnerproducts}
 F \ \   =  \ \  \smallf{1}{6}E_3(\Omega)+2c(2)(\smallf{1}{12}\hat{E}_1(\Omega)+\smallf{1}{48\pi}) \
-\ \sum_{j\,=\,0}^\infty \f{\langle H,\phi_j\rangle}{\l_j+12}\,\phi_j(\Omega) \\ +
\ \f{1}{4\pi i}\int_{\Re{s}=1/2}\f{\langle H,E_s\rangle}{s(s-1)-12} E_s(\Omega)\,ds
\end{multline}
 is the unique solution to (\ref{sl2laplaceA1}) which is $O(y^3)$ for large $y$.

The solution (\ref{theinhomogsolintermsofinnerproducts}) can be explicated by computing the inner products $\langle H,\phi_j\rangle$ and $\langle H,E_s\rangle$.  For $j\ge 1$ the former are more complicated and can be computed in terms the $L$-functions of the Maass forms $\phi_j$ using the Rankin-Selberg unfolding method.  However, since the Maass forms themselves are quite mysterious, this is of dubious direct utility.  The Maass forms are characterized by having zero constant term.  This indicates that the nonconstant Fourier modes of the solution to (\ref{laplaceeigen}) are difficult to directly compute using the spectral expansion.

At the same time, the inner products $\langle H,\phi_0\rangle$ and   $\langle H,E_s\rangle$ can be computed very explicitly, and together give an alternative derivation of the constant Fourier mode (\ref{zeromode})  of the solution to (\ref{laplaceeigen}).  The rest of this appendix indicates how these computations are carried out.

Let ${\mathcal F}_C:=\{x+iy| x\in[-\frac12,\frac12], x^2+y^2\geq1,
y\leq C\} $ denote the points in the standard fundamental domain
$\mathcal F$ for $SL(2,\Z)$ having imaginary part bounded by $C$.
Let $\L^C$ be the truncation operator on automorphic functions which
subtracts the constant term at points in the $\G$-translates of
$\mathcal F-\mathcal F_C$:
\begin{equation}\label{truncationoperator}
    (\L^C \phi)(x+iy) \ \ = \ \ \left\{
                                \begin{array}{ll}
                                  \phi(x+iy)-\int_0^1 \phi(u+iy)du, & y\,\in\,{\mathcal F}-{\mathcal F}_C\,,\\
                                  \phi(x+iy), & y\,\in\,{\mathcal F}_C\,
                                \end{array}
                              \right.
\end{equation}
(this formula defines $\L^C\phi$ on the fundamental domain ${\mathcal F}$; its value elsewhere is determined by automorphy).

   The {\em Maass-Selberg relations} state that
\begin{multline}\label{maassselbergrelations}
  \langle \L^C E_{s_1},E_{s_2}\rangle  \ \  = \ \
  \f{C^{s_1+\overline{s_2}-1}}{s_1+\overline{s_2}-1}+c(2\overline{s_2}-1)\f{C^{s_1-\overline{s_2}}}{s_1-
\overline{s_2}}
  +c(2s_1-1)\f{C^{\overline{s_2}-s_1}}{\overline{s_2}-s_1}\cr
+c(2s_1-1)c(2\overline{s_2}-1)
  \f{C^{1-s_1-\overline{s_2}}}{1-s_1-\overline{s_2}}\,.
\end{multline}
Since $\L^C E_{s_1}$ decays rapidly in the cusp, this inner product
differs from $\int_{\mathcal F_C}E_{s_1}\overline{E_{s_2}}\,\f{dxdy}{y^2}$ by an additive term of size $o(1)$ as $C\rightarrow\infty$.

Since $H$ is square-integrable its inner product with $E_{1/2+it}$ converges, though the inner products of its three constituent terms in (\ref{rhsmanip1}) do not.  Write
\begin{multline}\label{limit}
  \langle H,E_{1/2+it}\rangle   \ \ = \ \   -\lim_{C\rightarrow\infty}\int_{{\mathcal F}_C} E_{\frac32}^2(\Omega)
E_{\frac12-it}(\Omega)\,\smallf{dx\,dy}{y^2} \ + \\
  \lim_{C\rightarrow\infty}\int_{{\mathcal F}_C} E_3(\Omega)E_{\frac12-it}(\Omega)\,\smallf{dx\,dy}{y^2} \ +
 \  2\,c(2)\lim_{C\rightarrow\infty}\int_{{\mathcal F}_C} \hat{E}_1(\Omega)E_{\frac12-it}(\Omega)\,\smallf{dx\,dy}{y^2}\,.
\end{multline}
Since the compensating $o(1)$ terms disappear in the $C\to\infty$ limit, $E_{1/2-it}$ can be replaced with $\L^C E_{1/2-it}$ in each of
the above integrals.
The second integral on the righthand side can then be directly handled
by (\ref{maassselbergrelations}), while the third integral requires simply
  taking the constant term in the Laurent expansion of both sides at $s_2=1$.

However, the integral involving $E_{\frac32}(z)^2E_s(z)$ is more
subtle since standard regularization techniques (such as Zagier's
method~\cite{zagiermethod}) do not directly apply.
  This is because if we unfold the fastest growing series, $E_{\frac32}$, there still
remains the product $E_{\frac32}E_{\frac12-it}$, which has size on the order of  $y^2$.
 Zagier's method first truncates $E_{\frac32}$ by subtracting a term of size $y^{3/2}$,
and what remains decays only like $y^{-1/2}$; this is not enough to get integrability.
Instead we play a similar game as in (\ref{rhsmanip1}) by using other Eisenstein series
 which match those growth rates, and rewrite the first integral on the righthand side of (\ref{limit}) as
\begin{multline}\label{tripleprodint1}
  \int_{\mathcal F_C} E_{3/2}^2(\Omega)E_{\frac12-it}(\Omega)\,\smallf{dxdy}{y^2} \ \  = \\
  \int_{{\mathcal F}_C}
  E_{3/2}(\Omega)[E_{3/2}(\Omega)E_{\frac12-it}(\Omega)-E_{2-it}(\Omega)-\phi(\smallf12-it)E_{2+it}(\Omega)]\,\smallf{dxdy}{y^2}
  \cr
 +  \int_{{\mathcal F}_C} E_{\frac32}(\Omega)\,E_{2-it}(\Omega)\,\smallf{dxdy}{y^2}\ + \ \phi(\smallf12-it)
\int_{{\mathcal F}_C} E_{3/2}(\Omega)\,E_{2+it}(\Omega)\,\smallf{dxdy}{y^2}\,.
\end{multline}
The advantage of this rearrangement is that the bracketed
expression on the righthand side grows at most like $O(\log y)$, and so Zagier's truncation applies.  The last two integrals can be estimated using the Maass-Selberg relations (\ref{maassselbergrelations}).

Finally, the inner product $\langle H,\phi_0\rangle$ can be also
 computed using these techniques (using the fact it the residue of $E_s$ at $s=1$ is a constant function), though an important simplification
occurs because $\phi_0$ is constant:~namely, the truncated integral of
$E_{3/2}^2\phi_0=\f{\sqrt{3}}{\pi}E_{3/2}E_{3/2}$ can be computed using the Maass-Selberg relations (\ref{maassselbergrelations}).


\begin{thebibliography}{99}

\bibitem{Green:1987mn}
  M.~B.~Green, J.~H.~Schwarz and E.~Witten,
  ``Superstring Theory. Vol. 2: Loop Amplitudes, Anomalies And Phenomenology,''
  Cambridge, Uk: Univ. Pr. ( 1987) 596 P. (Cambridge Monographs On Mathematical Physics)

\bibitem{Green:2005ba}
M.B. Green and  P. Vanhove,
{\sl Duality and higher   derivative terms in M theory},
JHEP {\bf 0601} (2006) 093
  [arXiv:hep-th/0510027].
  %%CITATION = JHEPA,0601,093;%%


\bibitem{GMRV}
M.~B.~Green, S.~D.~Miller, J.~G.~Russo and P.~Vanhove,
``Eisenstein Series for Higher-Rank Groups and String Theory Amplitudes,''
Commun.\ Num.\ Theor.\ Phys.\ {\bf 4} (2010) 551
[arXiv:1004.0163 [hep-th]].
%%CITATION = ARXIV:1004.0163;%%


\bibitem{GMV} M.B.~Green, S.D.~Miller, and P.~Vanhove, ``Small
  representations, string instantons, and Fourier modes of Eisenstein
  series'' (with an appendix by D. Ciubotaru and P. Trapa), to appear
  in the Journal of Number Theory volume in memory of Stephen Rallis,
  [hep-th/1111.2983]
%%CITATION = ARXIV:1111.2983;%%


%\cite{Witten:1995ex}
\bibitem{Witten:1995ex}
  E.~Witten,
  ``String theory dynamics in various dimensions,''
  Nucl.\ Phys.\ B {\bf 443} (1995) 85
  [hep-th/9503124].
  %%CITATION = HEP-TH/9503124;%%


\bibitem{Schwarz:1995dk}
  J.~H.~Schwarz,
  ``An SL(2,Z) multiplet of type IIB superstrings,''
  Phys.\ Lett.\ B {\bf 360} (1995) 13
   [Erratum-ibid.\ B {\bf 364} (1995) 252]
  [hep-th/9508143].
  %%CITATION = HEP-TH/9508143;%%


\bibitem{Aspinwall:1995fw}
  P.~S.~Aspinwall,
  ``Some relationships between dualities in string theory,''
  Nucl.\ Phys.\ Proc.\ Suppl.\  {\bf 46} (1996) 30
  [hep-th/9508154].
  %%CITATION = HEP-TH/9508154;%%



\bibitem{Green:2008bf}
  M.~B.~Green, J.~G.~Russo and P.~Vanhove,
  ``Modular properties of two-loop maximal supergravity and connections with string theory,''
  JHEP {\bf 0807} (2008) 126
  [arXiv:0807.0389 [hep-th]].
  %%CITATION = ARXIV:0807.0389;%%



\bibitem{D'Hoker:1988ta}
  E.~D'Hoker and D.~H.~Phong,
  ``The Geometry of String Perturbation Theory,''
  Rev.\ Mod.\ Phys.\  {\bf 60} (1988) 917.
  %%CITATION = RMPHA,60,917;%%

\bibitem{D'Hoker:2001nj}
E.~D'Hoker and D.~H.~Phong,
``Two Loop Superstrings. 1. Main Formulas,''
Phys.\ Lett.\ B {\bf 529} (2002) 241
[hep-th/0110247].
%%CITATION = HEP-TH/0110247;%%

%\cite{Berkovits:2005ng}
\bibitem{Berkovits:2005ng}
  N.~Berkovits and C.~R.~Mafra,
``Equivalence of two-loop superstring amplitudes in the pure spinor and RNS formalisms,''
  Phys.\ Rev.\ Lett.\  {\bf 96} (2006) 011602
  [hep-th/0509234].
  %%CITATION = HEP-TH/0509234;%%

\bibitem{Seiberg:1996nz}
N.~Seiberg and E.~Witten,
``Gauge Dynamics and Compactification to Three-Dimensions,''
In *Saclay 1996, The mathematical beauty of physics* 333-366
[hep-th/9607163].
%%CITATION = HEP-TH/9607163;%%


\bibitem{flato} W.~Schmid, ``Automorphic distributions for $SL(2,\R)$'',  in: Conférence Mosh\'e Flato 1999, Quantization, Deformations and Symmetries, volume 1, Mathematical Physics Studies {\bf 21} (2000), Kluwer Academic Publishers, pp.345-387

\bibitem{korea}
S.D.~Miller and W.~Schmid,
``The Rankin-–Selberg method for automorphic distributions,'' in
{\em Representation theory and automorphic forms}, Birkhäuser Boston, 2008. 111-150

\bibitem{bump}
D.~Bump, ``Automorphic Forms and Representations", Cambridge Studies in Advanced Mathematics, {\bf 55}, 1996.


%\cite{D'Hoker:2013eea}
\bibitem{D'Hoker:2013eea}
  E.~D'Hoker and M.~B.~Green,
  ``Zhang-Kawazumi Invariants and Superstring Amplitudes,''
  arXiv:1308.4597 [hep-th].
  %%CITATION = ARXIV:1308.4597;%%


\bibitem{Zhang}
S.W. Zhang, ``Gross - Schoen Cycles and Dualising Sheaves"
Inventiones mathematicae, Volume 179, Issue 1, pp 1-73
arXiv:0812.0371

%\cite{Kawazumi}
\bibitem{Kawazumi}
N. Kawazumi, ``Johnson's homomorphisms and the Arakelov Green function",
arXiv:0801.4218 [math.GT].

\bibitem{D'Hoker:2014} E.~D'Hoker, M.B. Green, B. Pioline and R.Russo, In preparation.

%\cite{Green:1999pu}
\bibitem{Green:1999pu}
  M.~B.~Green, H.~-h.~Kwon and P.~Vanhove,
  ``Two loops in eleven-dimensions,''
  Phys.\ Rev.\ D {\bf 61} (2000) 104010
  [hep-th/9910055].
  %%CITATION = HEP-TH/9910055;%%

\bibitem{Gomez:2013sla}
H.~G\'omez and C.~R.~Mafra,
``The Closed-String 3-Loop Amplitude and S-Duality,''
arXiv:1308.6567 [hep-th].
%%CITATION = ARXIV:1308.6567;%%


%\cite{Dorey:1997ij}
\bibitem{Dorey:1997ij}
  N.~Dorey, V.~V.~Khoze, M.~P.~Mattis, D.~Tong and S.~Vandoren,
  ``Instantons, three-dimensional gauge theory, and the Atiyah-Hitchin manifold,''
  Nucl.\ Phys.\ B {\bf 502} (1997) 59
  [hep-th/9703228].
  %%CITATION = HEP-TH/9703228;%%


\bibitem{Gibbons:1995yw}
  G.~W.~Gibbons and N.~S.~Manton,
  ``The Moduli space metric for well separated BPS monopoles,''
  Phys.\ Lett.\ B {\bf 356} (1995) 32
  [hep-th/9506052].
  %%CITATION = HEP-TH/9506052;%%

\bibitem{Hanany:2000fw}
A.~Hanany and B.~Pioline,
``(Anti-)Instantons and the Atiyah-Hitchin Manifold,''
JHEP {\bf 0007} (2000) 001
[hep-th/0005160].
%%CITATION = HEP-TH/0005160;%%


\bibitem{Green:2010wi}
M.~B.~Green, J.~G.~Russo and P.~Vanhove,
``Automorphic Properties of Low Energy String Amplitudes in Various Dimensions,''
Phys.\ Rev.\ D {\bf 81} (2010) 086008
[arXiv:1001.2535 [hep-th]].
%%CITATION = ARXIV:1001.2535;%%

\bibitem{Green:2010sp}
M.~B.~Green, J.~G.~Russo and P.~Vanhove,
``String Theory Dualities and Supergravity Divergences,''
JHEP {\bf 1006} (2010) 075
[arXiv:1002.3805 [hep-th]].
%%CITATION = ARXIV:1002.3805;%%

\bibitem{Basu:2007ck}
A.~Basu,
``The $ D^6$ $ R^4$ Term in Type IIB String Theory on $ T^2$ and U-Duality,''
Phys.\ Rev.\ D {\bf 77} (2008) 106004
[arXiv:0712.1252 [hep-th]].
%%CITATION = ARXIV:0712.1252;%%


%\cite{Fleig:2012xa}
\bibitem{Fleig:2012xa}
  P.~Fleig and A.~Kleinschmidt,
  ``Eisenstein series for infinite-dimensional U-duality groups,''
  JHEP {\bf 1206} (2012) 054
  [arXiv:1204.3043 [hep-th]].
  %%CITATION = ARXIV:1204.3043;%%



\bibitem{Beisert:2010jx}
N.~Beisert, H.~Elvang, D.~Z.~Freedman, M.~Kiermaier, A.~Morales and S.~Stieberger,
``E7(7) Constraints on Counterterms in ${\mathcal{N}}\!=8$ Supergravity,''
Phys.\ Lett.\ B {\bf 694} (2010) 265
[arXiv:1009.1643 [hep-th]].
%%CITATION = ARXIV:1009.1643;%%


\bibitem{sarnak}
P.~Sarnak, ``Some Applications of Modular Forms,'' Cambridge Tracts in Mathematics, {\bf 99}, Cambridge University Press.

\bibitem{iwaniec}
H.~Iwaniec, ``Topics in Classical Automorphic Forms,'' Graduate Studies  in Mathematics, {\bf 17}, American Mathematical Society Press.

\bibitem{iwaniec2} H.~Iwaniec, ``Spectral Methods of Automorphic Forms,'' Graduate Studies  in Mathematics, {\bf 53}, American Mathematical Society Press.

\bibitem{zagiermethod}
D. Zagier, ``The Rankin-Selberg method for automorphic functions which are not of rapid
decay,''  J. Fac. Sci. Univ. Tokyo Sect. IA Math. 28 (1981) no. 3, 415–437 (1982).

\end{thebibliography}
\end{document}